%% file: lipics-v2021-article.tex
\documentclass[a4paper,USenglish,cleveref, autoref, thm-restate,pdfa]{lipics-v2021}

\pdfoutput=1 

\usepackage{tikz}
\usepackage{tikz-3dplot}
\usetikzlibrary{shapes, positioning, decorations.pathreplacing, arrows.meta, angles, quotes}
\usepackage{chemfig}
\usepackage[ruled,vlined]{algorithm2e}
\usepackage{complexity}
\newtheorem{problem}{Problem}
\newtheorem*{theorem*}{Theorem}
\newtheorem*{lemma*}{Lemma}

\title{Computational Generation of Substrate-Specific Molecular Cages} 

\titlerunning{Computational Generation of Substrate-Specific Molecular Cages} 

\author{No\'e Demange}{DAVID lab, UVSQ - University Paris-Saclay, Versailles, France}{noe.demange@uvsq.fr}{https://orcid.org/0009-0007-9438-5552}{}

\author{Yann Strozecki}{DAVID lab, UVSQ - University Paris-Saclay, Versailles, France}{yann.strozecki@uvsq.fr}{https://orcid.org/0000-0002-0891-3766}{}

\author{Sandrine Vial}{DAVID lab, UVSQ - University Paris-Saclay, Versailles, France}{sandrine.vial@uvsq.fr}{https://orcid.org/0009-0004-5545-7857}{}

\authorrunning{N. Demange, Y. Strozecki and S. Vial} 

\Copyright{No\'e Demange, Yann Strozecki and Sandrine Vial} 

\ccsdesc[100]{Theory of computation~Design and analysis of algorithms}
\ccsdesc[100]{Applied computing~Chemistry} 

\keywords{Enumeration, Molecular Cage, Cheminformatics, Geometric Algorithms, Experimental Algorithms} 

\category{} 

\relatedversion{} 

\supplementdetails[linktext={\\https://github.com/NoeDemange/MolecularCagesGeneration}, subcategory={Source code}, swhid={swh:1:dir:4724ee746285e388bfbb126e84babf8a21e1aecc}]{Software}{https://github.com/NoeDemange/MolecularCagesGeneration} 

\acknowledgements{Authors want to thank Olivier David for his helpful comments.}

\nolinenumbers 

\EventEditors{Martin Aum\"{u}ller and Irene Finocchi}
\EventNoEds{2}
\EventLongTitle{24th Symposium on Experimental Algorithms (SEA 2026)}
\EventShortTitle{SEA 2026}
\EventAcronym{SEA}
\EventYear{2026}
\EventDate{June 22--24, 2026}
\EventLocation{Copenhagen, Denmark}
\EventLogo{}
\SeriesVolume{371}
\ArticleNo{15}

\newcommand{\rot}[1]{\rotatebox{75}{\scriptsize\textbf{#1}}}
\newcommand{\T}{\rule{0pt}{2ex}}
\newcommand{\coord}{\operatorname{coord}}
\newcommand{\type}{\operatorname{type}}
\newcommand{\col}{\operatorname{col}}
\newcommand{\cov}{\operatorname{cov}}

\newcommand{\ICT}{\operatorname{ICT}}
\DeclareMathOperator{\partition}{Part}

\begin{document}

\definecolor{carbon}{rgb}{0.4, 0.85, 0.4}      
\definecolor{oxygen}{rgb}{0.9, 0.28, 0.28}      
\definecolor{hydrogen}{rgb}{0.8, 0.8, 0.8}    

\definecolor{cubeedge}{rgb}{0.0, 0.0, 0.0}    
\definecolor{gridline}{rgb}{0.7, 0.7, 0.7}    
\definecolor{cubedash}{rgb}{0.0, 0.4, 0.8}    

\maketitle

\begin{abstract}

In this paper, we propose a method to build molecular cages designed to capture
a specific substrate. We model a cage as a graph of atoms with coordinates in space,
and several constraints on their edges (degree, length and angle). We use a simple method to place binding patterns which are able to interact with certain parts of the substrate. We then propose an algorithm which considers all possible ways of connecting these binding patterns and try to construct the smallest possible molecular paths realizing these connections. We investigate many variants of our method in order to obtain the most efficient algorithm, able to build cages of more than a hundred atoms.
\end{abstract}

\section{Introduction}
\label{sec:Introduction}
\input{section/introduction}

\section{Modeling and Definitions}\label{sec:Modeling}

\input{section/model}

\section{Construction of a Molecular Path}\label{sec:path}
\input{section/path}

\section{Interconnection Trees}\label{sec:interconnection}
\input{section/interconnection}

\section{Experimental Evaluation on Real Substrates}\label{sec:evalGlobalOnSub}
\input{section/eval}

\section{Conclusion}\label{sec:conclusion}
\input{section/conclusion}



\bibliography{citations}
\newpage
\appendix

\input{section/appendix}

\end{document}

%% file: section/introduction.tex
Molecular cages are a class of discrete, three-dimensional molecular structures formed by the self-assembly of building blocks into closed frameworks that define an internal cavity capable of hosting guest molecules. These architectures are characterized by well-defined size, shape, and connectivity determined by their constituent components and bonding topology. Since pioneering works of Nobel-prize winning 
chemists D.J. Cram, J.-M. Lehn and C.J. Pedersen~\cite{cram1988, lehn1988,pedersen1988}, recent years have experienced spectacular growth of interest for molecular architectures possessing a defined inner-space~\cite{lehn2001}. The design and synthesis of such molecular cages remain major challenges, as they involve exploring a vast combinatorial space of possible building blocks, connectivities, and three-dimensional arrangements. Anticipating which combinations will lead to a target architecture that can be reliably realized through self-assembly requires navigating a complex design space, where small variations in topology or geometry may result in fundamentally different outcomes. Despite this complexity, successfully synthesized molecular cages exhibit a broad spectrum of applications~\cite{zhang2014}, ranging from the absorption of gases, such as carbon dioxide ($CO_2$) and methane~\cite{holst2010}, to their roles in medicine~\cite{therrien2012,tromans2019} and the containment of hazardous substances, exemplified by the storage of white phosphorus ~\cite{mal2009}.

In this context, cheminformatics and computational modeling provide essential tools to address the complexity of molecular cage design. Although such models offer only coarse 
abstractions of chemical reality, they enable the systematic exploration of large design spaces that are inaccessible to intuition or experimentation alone. In particular, molecular structures are commonly represented as graphs in which vertices represent atoms and edges represent chemical bonds. This graph structure can be further enriched with additional information, such as atom and bond types as well as spatial coordinates. Such graph-based representations, capturing both connectivity and geometry, have long provided a foundation for reasoning about molecular structure, similarity and assembly~\cite{raymond2002,trinajstic2018}.

More recently, advances in algorithmic approaches and data-driven methods, exemplified by breakthroughs such as AlphaFold in protein structure prediction~\cite{jumper2021}, have demonstrated the potential of computational models to provide 
guidance to experimental chemists, even in the presence of significant physical simplifications. Within the community of experimental algorithms, several recent contributions 
have similarly combined heuristic search, combinatorial modeling, and empirical evaluation to study complex chemical problems~\cite{aboulfathSEA2025,gianfrottaSEA2021,antypovSEA2020}.

Two complementary paradigms exist for constructing molecular cages for substrates. A substrate denotes the molecule that is intended to be captured by a molecular cage which is a connected molecular structure that encloses an internal cavity capable of accommodating the substrate. The first paradigm, traditionally used in supramolecular chemistry, follows a host-first strategy: a cage is designed and synthesized a priori, and its ability to encapsulate different substrates is evaluated afterward. This approach has been widely studied, especially for highly symmetric cages that are easier to synthesize and characterize~\cite{barthSEA2015,zhang2014,yang2023}. Although it reliably produces stable and experimentally accessible structures, it often results in limited substrate specificity, since the guest is not considered during the design process. Consequently, selecting an appropriate host for a given target substrate remains largely empirical.

The second paradigm adopts a guest-driven strategy. Starting from a target substrate, the objective is to design a molecular cage that is both compatible with and specific to that substrate. In this context, a molecule is considered as a cage for a given substrate if it is shape-complementary and forms favorable non-covalent interactions with it. Compatibility requires chemical interactions between host and guest, while specificity implies preferential binding, mainly driven by geometric complementarity between the cage interior and the shape of the substrate.

\subsection{Contributions}

We introduce a guest-driven method for generating molecular cages that targets a given substrate. Our approach guarantees substrate specificity by positioning binding patterns around the substrate, following a workflow introduced in~\cite{bricage2018}, as explained in Section~\ref{sec:Modeling}.

We propose several variants of an algorithm to construct molecular paths that connect these binding patterns in Section~\ref{sec:path}. These methods aim to minimize path length, which naturally enforces geometric complementarity between the constructed cage and the shape of the substrate, thereby increasing substrate recognition specificity. Moreover, shorter paths facilitate synthesis in the laboratory, by limiting the structural complexity.

To structure this construction process, we introduce an intermediate combinatorial object in Section~\ref{sec:interconnection}, the interconnection tree, which specifies which binding patterns must be connected. We provide an efficient enumeration algorithm, both in terms of theoretical complexity and practical performance, to generate such trees and to use them as a backbone for cage assembly.

Beyond the proposed modeling framework and algorithms, a major contribution of this work lies in the extensive experimental evaluation of our methods. In Section~\ref{sec:evalGlobalOnSub}, we systematically explore a wide range of parameters, including discretization levels, distance-based construction heuristics, branching factor in exhaustive search, and the use of interconnection tree weights for guidance. These experiments are conducted on real molecular substrates extracted from the Cambridge Structural Database~\cite{CSD}, allowing us to identify parameter settings that are effective in practice. As a result, our approach can automatically generate molecular cages for a wide diversity of substrates within reasonable computation times.

%% file: section/model.tex
This section introduces the core objects used throughout the paper. Since we operate on molecular structures, we adopt a graph-based model that explicitly incorporates chemical constraints. Atoms and covalent bonds are represented as vertices and edges, respectively, and the embedding of the graph in three-dimensional space is used to capture geometric and chemical feasibility.

In general, substrates may contain a wide variety of atom types. In this work, we restrict ourselves to organic molecules. For the construction of molecular cages, we consider a limited set of atom types, namely carbon, hydrogen, oxygen, and nitrogen, with a predominance of carbon and hydrogen. This restriction simplifies the model while remaining expressive enough for the targeted applications. The algorithms themselves are generic and can be extended to additional atom types.

We introduce the concept of a \textbf{molecular graph}, which encodes all the relevant information about the molecules under study (see an example in Figure~\ref{fig:MGall_BAHSUY}).

\newcommand{\Ca}[3]{\node[ball color=carbon, circle, minimum size=10pt] at (#1, #2, #3) {\tiny C};}
\newcommand{\Ox}[3]{\node[ball color=oxygen, circle, minimum size=10pt] at (#1, #2, #3) {\tiny O};}
\newcommand{\Hy}[3]{\node[ball color=hydrogen, circle, minimum size=10pt] at (#1, #2, #3) {\tiny H};}

\begin{figure}[b]
    \centering
    \begin{subfigure}[b]{0.59\textwidth}
        \centering
        \resizebox{0.75\linewidth}{!}{
            \begin{tikzpicture}
                \node[circle, fill=carbon, minimum size=1cm, text=black] (C1) at (0,0){C};
                \node[circle, fill=hydrogen, minimum size=1cm, text=black] (H6) at (-2.2,0){H}; 
                \node[circle, fill=hydrogen, minimum size=1cm, text=black] (H7) at (0,2.2){H}; 
                \node[circle, fill=hydrogen, minimum size=1cm, text=black] (H8) at (0,-2.2){H}; 
                
                    \draw (C1) -- (H6)node[midway, sloped, above, font=\footnotesize] {$0.11\,\mathrm{nm}$};
                    \draw (C1) -- (H7)node[midway, sloped, above, font=\footnotesize] {$0.11\,\mathrm{nm}$};
                    \draw (C1) -- (H8)node[midway, sloped, above, font=\footnotesize] {$0.11\,\mathrm{nm}$};

                \node[circle, fill=oxygen, minimum size=1cm, text=black] (O1) at (3,0){O}; 
                \draw (C1) -- (O1)node[midway, sloped, above, font=\footnotesize] {$0.15\,\mathrm{nm}$};
                \node[circle, fill=carbon, minimum size=1cm, text=black] (C3) at (6,0){C}; 
                \node[circle, fill=carbon, minimum size=1cm, text=black] (C5) at (6,-3){C}; 
                \node[circle, fill=oxygen, minimum size=1cm, text=black] (O4) at (9,0){O}; 
                \draw (C3) -- (O1)node[midway, sloped, above, font=\footnotesize] {$0.15\,\mathrm{nm}$};
                \draw (C3) -- (O4)node[midway, sloped, above, font=\footnotesize] {$0.15\,\mathrm{nm}$};
                \draw (C3) -- (C5)node[midway, sloped, above, font=\footnotesize] {$0.15\,\mathrm{nm}$};

                \node[circle, fill=hydrogen, minimum size=1cm, text=black] (H9) at (3.8,-3){H}; 
                \node[circle, fill=hydrogen, minimum size=1cm, text=black] (H10) at (6,-5.2){H}; 
                \node[circle, fill=hydrogen, minimum size=1cm, text=black] (H11) at (8.2,-3){H}; 
                    \draw (C5) -- (H9)node[midway, sloped, above, font=\footnotesize] {$0.11\,\mathrm{nm}$};
                    \draw (C5) -- (H10)node[midway, sloped, above, font=\footnotesize] {$0.11\,\mathrm{nm}$};
                    \draw (C5) -- (H11)node[midway, sloped, above, font=\footnotesize] {$0.11\,\mathrm{nm}$};

                \draw pic[draw,"$109.5^\circ$", angle radius=0.6cm,angle eccentricity=1.45, pic text options={xshift=-5pt,yshift=-1pt}, font=\footnotesize]{angle = H7--C1--H6};
                \draw pic[draw,"$109.5^\circ$", angle radius=0.75cm,angle eccentricity=1.45, font=\footnotesize]{angle = H6--C1--H8};
                \draw pic[draw,"$109.5^\circ$", angle radius=0.75cm,angle eccentricity=1.55, font=\footnotesize]{angle = O1--C1--H7};
                \draw pic[draw,"$120^\circ$", angle radius=0.75cm,angle eccentricity=1.45, font=\footnotesize]{angle = C1--O1--C3};
                \draw pic[draw,"$120^\circ$", angle radius=0.75cm,angle eccentricity=1.35, font=\footnotesize]{angle = O4--C3--O1};
                \draw pic[draw,"$120^\circ$", angle radius=0.6cm,angle eccentricity=1.45, font=\footnotesize]{angle = O1--C3--C5};
                \draw pic[draw,"$109.5^\circ$", angle radius=0.6cm,angle eccentricity=1.55, font=\footnotesize]{angle = C3--C5--H9};
                \draw pic[draw,"$109.5^\circ$", angle radius=0.75cm,angle eccentricity=1.45, font=\footnotesize]{angle = H9--C5--H10};
                \draw pic[draw,"$109.5^\circ$", angle radius=0.6cm,angle eccentricity=1.45, pic text options={xshift=8pt,yshift=2pt}, font=\footnotesize]{angle = H10--C5--H11};
            \end{tikzpicture}
        }
    \end{subfigure}
    \begin{subfigure}[b]{0.39\textwidth}
        \centering
        \resizebox{\linewidth}{!}{
            \begin{tikzpicture}[scale=1]
                \tdplotsetmaincoords{75}{120}
                \begin{scope}[tdplot_main_coords]
            
                    \coordinate (A) at (0,0,0);
                    \coordinate (B) at (3,0,0);
                    \coordinate (C) at (3,5,0);
                    \coordinate (D) at (0,5,0);
                    \coordinate (E) at (0,0,3);
                    \coordinate (F) at (3,0,3);
                    \coordinate (G) at (3,5,3);
                    \coordinate (H) at (0,5,3);
        
                    \foreach \i in {0,1,...,3} {
                      \draw[gridline,dashed] (A) ++(0,0,\i) -- +(3,0,0);
                      \draw[gridline,dashed] (A) ++(\i,0,0) -- +(0,0,3);
                      \draw[gridline,dashed] (A) ++(\i,0,0) -- +(0,5,0);
                    }
                    
                    \foreach \i in {0,1,...,5} {
                      \draw[gridline,dashed] (A) ++(0,\i,0) -- +(3,0,0);
                    }
        
                    \draw[cubeedge,dashed] (A) -- (B) ;
                    \draw[cubeedge] (B) -- (C);
                    \draw[cubeedge] (C) -- (D);
                    \draw[cubeedge,dashed] (A) -- (D);
                    \draw[cubeedge] (E) -- (F);
                    \draw[cubeedge] (G) -- (H);
                    \draw[cubeedge] (E) -- (H);
                    \draw[cubeedge,dashed] (A) -- (E);
                    \draw[cubeedge] (B) -- (F);
                    \draw[cubeedge] (D) -- (H);
            
                    \draw[->,dashed] (A) -- (B) node[anchor=north east] {$x$};
                    \foreach \i in {0,1,2,3} {
                      \draw (A) ++(\i,0,0) -- ++(0,0,0.15)
                            node[anchor=south east] {\footnotesize \i};
                    }
            
                    \draw[->,dashed] (A) -- (D) node[anchor=south west] {$y$};
                    \foreach \i in {0,1,2,3,4,5} {
                      \draw (A) ++(0,\i,0) -- ++(0.15,0,0)
                            node[anchor=south] {\footnotesize \i};
                    }
            
                    \draw[->,dashed] (A) -- (E) node[anchor=south] {$z$};
                    \foreach \i in {0,1,2,3} {
                      \draw (A) ++(0,0,\i) -- ++(0.15,0,0)
                            node[anchor=north west] {\footnotesize \i};
                    }

                    \draw[thick] (1.4352,0.2875,1.7920)--(1.3999,1.7352,1.9103); 
                    \draw[thick] (1.4352,0.2875,1.7920)--(0.5832,0.0000,1.8960);
                    \draw[thick] (1.4352,0.2875,1.7920)--(1.8879,0.0557,0.9917);
                    \draw[thick] (1.4352,0.2875,1.7920)--(1.9487,0.1462,2.5818);
                    \draw[thick] (1.3999,1.7352,1.9103)--(0.7516,2.3644,0.9239);
                    \draw[thick] (0.7516,2.3644,0.9239)--(0.2582,1.7790,0.0000);
                    \draw[thick] (0.7516,2.3644,0.9239)--(0.7725,3.8461,1.1047);
                    \draw[thick] (0.7725,3.8461,1.1047)--(0.7857,4.0995,1.9563);
                    \draw[thick] (0.7725,3.8461,1.1047)--(0.0000,4.1691,0.6601);
                    \draw[thick] (0.7725,3.8461,1.1047)--(1.5148,4.0925,0.5546);
            
                    \Hy{0.5832}{0.0000}{1.8960}
                    \Hy{1.8879}{0.0557}{0.9917}
                    \Hy{1.9487}{0.1462}{2.5818}
                    \Hy{0.7857}{4.0995}{1.9563}
                    \Hy{0.0000}{4.1691}{0.6601}
                    \Hy{1.5148}{4.0925}{0.5546}
        
                    \Ca{1.4352}{0.2875}{1.7920}
                    \Ca{0.7516}{2.3644}{0.9239}
                    \Ca{0.7725}{3.8461}{1.1047}
        
                    \Ox{1.3999}{1.7352}{1.9103}
                    \Ox{0.2582}{1.7790}{0.0000}
            
                    \draw (C) -- (G);
                    \draw (F) -- (G);
            
                \end{scope}
            \end{tikzpicture}
        }
    \end{subfigure}
    \caption{Two views of the molecular graph representing Methyl Acetate. The left view is the abstract graph, while the right one represents the spatial positions of the vertices.}
    \label{fig:MGall_BAHSUY}
\end{figure}
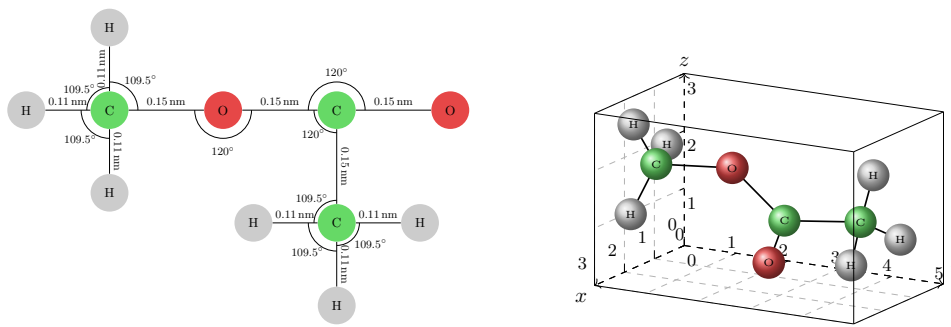

\begin{definition}
A \textbf{molecular graph} is a triple $(G, \coord, \type)$, where $G = (V, E)$ is a graph with vertex set $V$ (atoms) and edge set $E$ (covalent bonds). The function $\coord : V \rightarrow \mathbb{R}^3$ assigns coordinates to each vertex, and the function $\type : V \rightarrow A$ assigns to each vertex an atom type from a finite set $A$.
\end{definition}


A \textit{covalent bond} corresponds to a pair of atoms sharing electrons, and its length depends on the types of the atoms involved. We denote by $\cov(a_1,a_2)$ the length of the covalent bond of atoms of types $a_1$ and $a_2$. In this article, we use a simplified model with two bond lengths: $0.1125\,\mathrm{nm}$ for bonds involving hydrogen atoms, and $0.15\,\mathrm{nm}$ for bonds between non-hydrogen atoms.

The collision distance models steric repulsion, which prevents atoms from being placed too close to each other due to overlap of their electron clouds. We denote by $\col(a_1,a_2)$ the collision distance of atoms of types $a_1$ and $a_2$. In this article, we use a uniform collision distance of $0.1125\,\mathrm{nm}$.

The degree of a vertex corresponds to the number of covalent bonds it can form and is bounded by the valence of its atom type. Since we consider organic molecules, this degree is at most 4, but may be smaller depending on the atom type (see Table~\ref{tab:degree}). Formally, the degree of a vertex $v \in V$ is bounded by $\operatorname{d}(\type(v))$.

To lighten the presentation, we denote the distance between two vertices by $||u - v||$ instead of
$||\coord(u) - \coord(v)||$.

We define the notion of a \textbf{chemically realistic} molecular graph to capture the set of chemical constraints that such a graph must satisfy.

\begin{definition}
A molecular graph is chemically realistic if it satisfies, the following constraints:
\begin{itemize}
    \item No atomic collision: $\forall u, v \in V,\; \|u-v\| > \col(\type(u), \type(v))$.
    \item Covalent bond length: $\forall (u,v) \in E,\; \|u-v\| = \cov(\type(u), \type(v))$. 
    \item Valence constraint: $\forall v \in V,\; \operatorname{deg}(v) \leq \operatorname{d}(\type(v))$. 
    \item Bond angles around each vertex respect the VSEPR theory~\cite{gillespie1963}, with the relevant values given in Table~\ref{tab:VSEPR}.
\end{itemize}
\end{definition}

\begin{table}[t]
\centering
\scriptsize

\begin{minipage}[t]{0.25\textwidth}
\centering
\caption{Number of bonds depending on the type of atoms}
\label{tab:degree}
\setlength{\tabcolsep}{2pt}
\begin{tabular}{|c|c|}
\hline
\textbf{Atom} & \textbf{\#bonds} \\ \hline
Carbon   & 4 \\ \hline
Nitrogen & 3 \\ \hline
Oxygen   & 2 \\ \hline
Hydrogen & 1 \\ \hline
\end{tabular}
\end{minipage}
\hfill
\begin{minipage}[t]{0.71\textwidth}
\centering
\tiny
\caption{Representation and angle constraint based on geometry according to VSEPR theory~\cite{gillespie1963}.}
\label{tab:VSEPR}
\setlength{\tabcolsep}{2pt}
\begin{tabular}{|c|c|c|c|c|}
\hline
\textbf{Geometry} & \textbf{2D} & \textbf{3D} & \textbf{Angle} & \textbf{Margin} \\ \hline
Tetrahedral &
\chemfig[atom sep=0.7cm]{A(-[:90]X)(-[:199.5]X)(<[:309]X)(<:[:340.5]X)} &
\raisebox{-.4\height}{\includegraphics[scale=0.07]{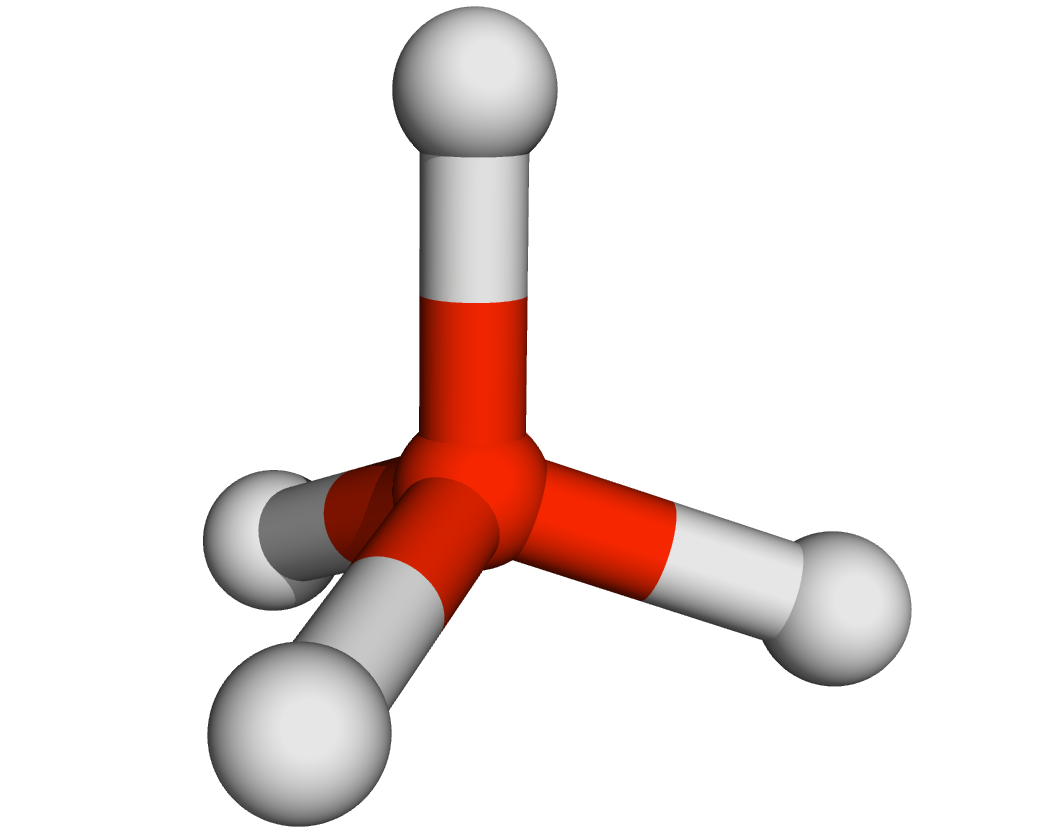}} &
109.5° & 3° \\ \hline
Triangular &
\chemfig[atom sep=0.7cm]{A(-[:180]X)(<[:330]X)(<:[:30]X)} &
\raisebox{-.4\height}{\includegraphics[scale=0.06]{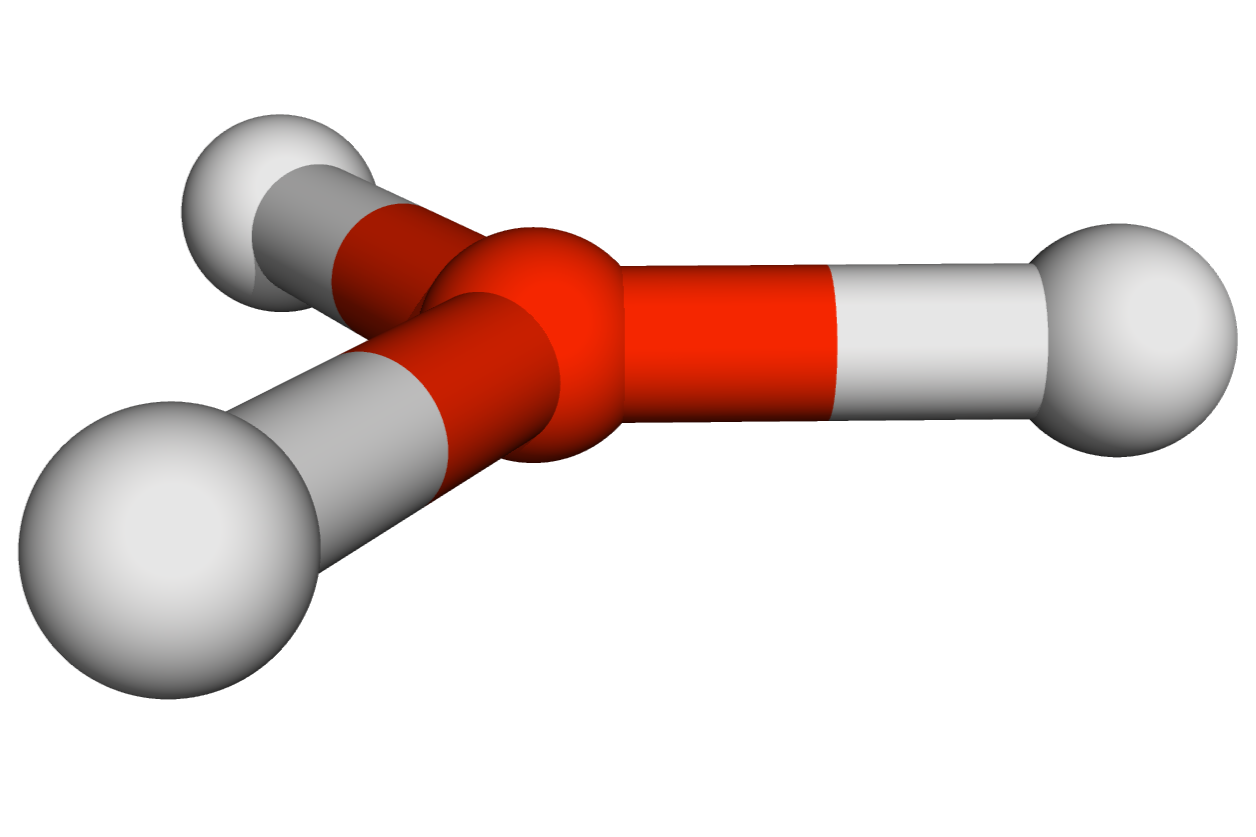}} &
120° & 2° \\ \hline
Linear &
\chemfig[atom sep=0.7cm]{X-A-X} &
\raisebox{-.2\height}{\includegraphics[scale=0.07]{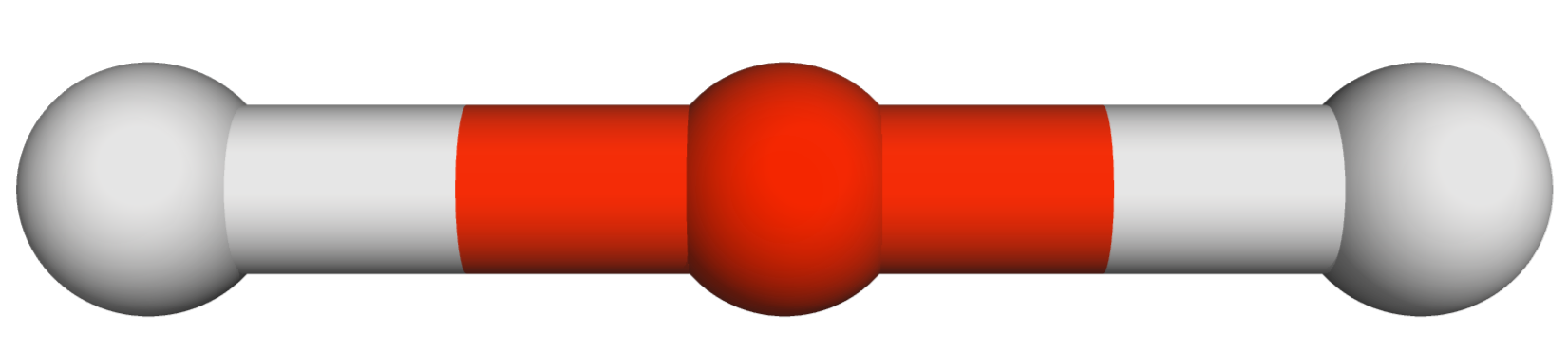}} &
180° & -- \\ \hline
\end{tabular}
\end{minipage}

\end{table}

\paragraph*{Binding Patterns}

Efficient substrate recognition requires a molecular cage to establish multiple intermolecular interactions in order to ensure both binding strength and selectivity. We focus on organic substrates and consider two key classes of non-covalent interactions: hydrogen bonds and $\pi$–$\pi$ stacking interactions.

A \textit{hydrogen bond}~\cite{arunan2011} involves a hydrogen atom bound to an electronegative donor and a second electronegative acceptor atom, typically oxygen or nitrogen. Its geometry is constrained in both distance and orientation (Figure~\ref{fig:hydrogen}). In our model, when the substrate provides a donor (resp.\ acceptor), we introduce a complementary acceptor (resp.\ donor) as a \textit{hydrogen binding pattern}. An atom set can act as donor or acceptor, but not both, and can participate in at most one hydrogen bond.

A \textit{$\pi$–$\pi$ stacking interaction}~\cite{meyer2003} occurs between approximately planar and parallel aromatic systems at a characteristic distance. We model such interactions by planar cyclic subgraphs representing aromatic rings (Figure~\ref{fig:aromatic}).

\begin{figure}[t]
\centering

\begin{minipage}[t]{0.49\textwidth}
\centering
\resizebox{0.6\linewidth}{!}{
\begin{tikzpicture}

\node[circle, fill=oxygen, minimum size=1cm, draw=blue, text=black] (O1) at (0,0) {\Large \textbf{O}};
\node[circle, fill=hydrogen, minimum size=1cm, draw=blue, thick] (H1) at (2.2,0) {\Large \textbf{H}};
\node[circle, fill=hydrogen, minimum size=1cm, draw=blue, thick] (H11) at (-1.41,-1.69) {\Large \textbf{H}};

\draw[blue] (O1) -- (H1);
\draw[blue] (O1) -- (H11);

\node[text=blue] at (-0.3,-1) [right] {\LARGE \textbf{Donor}};

\node[circle, fill=oxygen, minimum size=1cm, text=black] (O2) at (5.8,0) {\Large \textbf{O}};
\node[circle, fill=hydrogen, minimum size=1cm, draw=black, thick] (H2) at (6.9,1.91) {\Large \textbf{H}};
\node[circle, fill=hydrogen, minimum size=1cm, draw=black, thick] (H22) at (6.9,-1.91) {\Large \textbf{H}};

\draw (O2) -- (H2);
\draw (O2) -- (H22);

\node at (6.2,0) [right] {\LARGE \textbf{Acceptor}};

\draw[->,>=stealth,dashed, line width=1mm, color=orange] (H1) -- (O2);
\node at (3.8,1) [color=orange] {\LARGE\textbf{Hydrogen bond}};

\end{tikzpicture}
}
\caption{Hydrogen bond between two water molecules.}
\label{fig:hydrogen}
\end{minipage}
\hfill
\begin{minipage}[t]{0.49\textwidth}
\centering
\resizebox{0.42\linewidth}{!}{
\begin{tikzpicture}

\node (ring1) at (0,0) {
  \chemfig[cram width=2pt]{
    ?<[7,0.7]-[,,,,line width=2pt]
    >[1,0.7]-[3,0.7]-[4]?
  }
};

\draw[dashed, line width=0.8mm, color=orange] (0,0) -- (0,1.8);
\node at (2,0.9) [color=orange] {\LARGE\textbf{$\pi$--$\pi$ stacking}};

\node (ring2) at (0,1.8) {
  \chemfig[bond style={,red},cram width=2pt]{
    ?<[7,0.7]-[,,,,line width=2pt]
    >[1,0.7]-[3,0.7]-[4]?
  }
};

\end{tikzpicture}
}
\caption{$\pi$--$\pi$ stacking interactions between aromatic rings, which are $5$ or $6$-cyles of carbons.}
\label{fig:aromatic}
\end{minipage}

\end{figure}

Binding pattern placement follows the approach of Bricage~\cite{bricage2018}. For a given substrate, multiple valid placements may exist. We first identify all substrate sites eligible for hydrogen bonding or $\pi$–$\pi$ stacking based on atom types and local geometry. For each site, we generate candidate placements satisfying the geometric constraints of the interaction and discard those that collide with the substrate.

$\pi$–$\pi$ stacking patterns are placed first, as they are spatially extended, strongly constrain the geometry, and rarely interfere with one another. Hydrogen patterns are then considered. Their placement is subject to geometric and chemical incompatibilities, which are encoded in a conflict graph whose vertices represent patterns and edges represent conflicts. Selecting a compatible set of patterns corresponds to computing an independent set in this graph. We enumerate all maximal independent sets using the Bron–Kerbosch algorithm~\cite{bron1973}. Maximal independent sets correspond to sets of binding patterns that cannot be extended by adding another compatible pattern, and therefore represent configurations that maximize the number of interactions. In practice, we generate all maximal independent sets and then select representative ones to construct candidate cages. In future work, one could systematically evaluate all maximal sets or also consider non-maximal sets when a smaller number of interactions is desired.

This process yields a disconnected molecular graph of mutually compatible binding patterns, which are in interaction with the substrate. The graph is typically disconnected, since the binding patterns do not form covalent bonds between each other at this stage. The next step is therefore to connect these patterns using molecular paths in order to obtain a connected molecular cage. Section~\ref{sec:path} describes how these patterns can be connected by molecular paths to form a complete cage while satisfying chemical and geometric constraints.

\paragraph*{Connection Patterns and Paths}

Cage construction relies on basic building blocks, called \textbf{connection patterns}. A connection pattern is a molecular graph with two distinguished vertices, denoted $v_{in}$ and $v_{out}$, corresponding to atoms that can form covalent bonds with other patterns. These vertices define entry and exit points, allowing patterns to be assembled into molecular paths that connect binding patterns. The vertices $v_{in}$ and $v_{out}$ may coincide, as in the case of a single central atom, such as the tetrahedral carbon extensively used in this work.

\begin{definition}
Let $(p_1, p_2, \dots, p_k)$ be a sequence of connection patterns, and let $v_{in}^i$ and $v_{out}^i$ denote the distinguished vertices of $p_i$. We define $P$ as the union of the graphs $p_i$, augmented with edges $(v_{out}^i, v_{in}^{i+1})$ for all $i < k$. If $P$ is a molecular graph, the sequence $(p_1, p_2, \dots, p_k)$ is called a \textbf{molecular path}.
\end{definition}

%% file: section/path.tex
In this section, we introduce the concept of constructing a molecular path and the underlying \textsc{Molecular Path Construction} problem.

We consider two molecular graphs: $G_S$ representing a substrate and $G$ representing binding patterns and a partially constructed cage, together with two vertices $s$ and $t$ of $G$ corresponding to atoms of two binding patterns to be connected. A molecular path $P = (p_1,\dots,p_k)$ connects $s$ to $t$ if it can be attached to $G$ at these vertices and satisfies the following structural constraints:
\begin{enumerate}
\item $G \cup P$ is a chemically realistic molecular graph;
\item let $v_{in}^1$ be the distinguished vertex of $p_1$,
$\|v_{in}^1 - s\| = \cov(\type(v_{in}^1), \type(s))$;
\item let $v_{out}^k$ be the distinguished vertex of $p_k$, $\|v_{out}^k - t\| = \cov(\type(v_{out}^k), \type(t))$;
\item for all $u \in V_{G_S}$ and all $v \in V_P$,
$\|u - v\| \geq d_{weak}$.
\end{enumerate} 

The first constraint ensures chemical plausibility of the constructed structure, while constraints (2) and (3) enforce the attachment of the path $P$ to the binding patterns at atoms $s$ and $t$.
A \textit{weak bond} is an attractive, non-covalent interaction either within a molecule or between molecules. While different types of weak interactions exist and may correspond to different distances, we adopt a simplified model in which such interactions occur at a fixed distance of $0.18\,\mathrm{nm}$. Accordingly, in Constraint~(4), $d_{weak}$ denotes the minimum distance required for weak interactions. 
Since $d_{weak}$ is chosen such that $\forall a_1, a_2 \in A, d_{weak}>col(a_1,a_2)$, this constraint is stronger than the non-collision requirement implied by Constraint~(1) and ensures that the only weak interactions present are those explicitly modeled by the binding patterns.

In practice, constructing a molecular path that exactly satisfies all geometric constraints is generally impossible, due to the flexibility of molecular conformations. We therefore allow small deviations at the final attachment to $t$, bounded by $10^{\circ}$ for bond angles and $0.05\,\mathrm{nm}$ for bond lengths. These tolerances are fixed for simplicity and because the considered paths are short (fewer than 15 atoms). For longer paths, it would be natural to scale them with path length.
Rather than distributing these deviations along the entire structure, as in real molecules, we approximate this behavior by concentrating them at a single atom, which simplifies both construction and analysis.

We quantify the resulting distortion using the normalized root mean square deviation (NRMSD) of bond angles relative to VSEPR theory and of bond lengths relative to ideal covalent distances, providing a global measure of deviation from an ideal configuration.
The terminal vertex must satisfy the above angular and bond length tolerances with respect to the target, yielding a worst-case NRMSD of~$\sqrt{3}$, computed as {\small\[\sqrt{\left(\frac{\measuredangle C_{p_{k-2}}C_{p_{k-1}}C_{p_{k}} - 109.5}{10}\right)^2+\left(\frac{\measuredangle C_{p_{k-1}}C_{p_{k}}C_{p_{k+1}} - 109.5}{10}\right)^2+\left(\frac{\lVert C_{p_{k-1}} - C_{p_{k}}\rVert - 0.15}{0.05}\right)^2}\]}

We now define the \textsc{Molecular Path Construction} problem as follows: given the substrate $G_S$, a partial cage $G$, and two vertices $s,t \in V(G)$, find a molecular path $P$ connecting $s$ to $t$ that minimizes $(|P|,\mathrm{NRMSD}(P))$ in lexicographic order, where $|P|$ denotes the length of $P$.

For clarity of presentation, we restrict \textsc{Molecular Path Construction} to a single type of connection pattern: a tetrahedral carbon atom (see Figure~\ref{fig:pos_ajout_with_hydro}). This choice is sufficient to generate realistic cages, and all proposed methods naturally extend to multiple pattern types.

\subsection{Valid Positions of the Next Connection Pattern}

To solve the \textsc{Molecular Path Construction} problem, we propose a depth-first exploration of the solution space, guided by distance-based heuristics. The molecular path is constructed incrementally by successively adding connection patterns: At each step, a new pattern $p$ is appended to the end of a partial path $(p_1,\dots,p_i)$, producing a longer partial solution.  

A key challenge arises from the continuous nature of 3D space: even when geometric constraints are fixed, there is an infinite number of admissible positions for the next atom. To make the search tractable, we first discretize the space of possible positions around the relevant edge rotation. In this sense, the algorithm can be interpreted as a branch-and-bound search over the discretized rotation space, where partial paths are extended only along positions that are geometrically feasible and heuristically promising.

The main problem addressed in the remainder of this section is how to generate well-distributed, legal positions for the next connection pattern and how to select among them those that are most likely to yield shorter, chemically plausible paths.

First, we explain how to determine the legal positions to add an atom, according to VSEPR theory and fixed bond lengths. Once two consecutive atoms are known, the next atom must lie on a circle in 3D space (Figure~\ref{fig:pos_ajout}). 

When the position of this next atom is fixed, the two missing hydrogen atoms required to complete the tetrahedral carbon pattern, centered on $C_{p_i}$, the central atom of the pattern $p_i$, are added automatically (Figure~\ref{fig:pos_ajout_with_hydro}). We must also check whether these hydrogens satisfy collision constraints. Thus, placing a single atom induces the placement of two additional atoms, increasing the complexity of collision detection.

\begin{figure}[b]
\centering
    \begin{subfigure}[t]{0.47\textwidth}
        \centering
        \resizebox{0.65\linewidth}{!}{
        \tdplotsetmaincoords{-10}{0}
        
        \begin{tikzpicture}[tdplot_main_coords, scale=1]
            \definecolor{myBlue}{RGB}{70,200,220}

            \coordinate (A) at (0, 0, 0); 
            \coordinate (top) at (-2.83, 1, 0);
            \coordinate (bot) at (0, -3, 0);
        
            \draw[thick] (A) -- (bot);
            \draw[thick] (A) -- (top);
            
            \draw[very thick, cyan] (0, 1, 0) ellipse (2.83cm and 0.3cm);
            \draw[-{Latex[length=5mm]}, very thick, myBlue] 
                plot[domain=180:330, samples=50, variable=\t] 
                ({2.83*cos(\t)}, {1 + 0.3*sin(\t)}, 0);
        
            \draw[red, dashed, thick] (A) -- (0, 2, 0);
        
            \node[circle, draw=none, fill=carbon, minimum size=35pt] at (A) {$C_{p_i}$};
            \node[circle, draw=none, fill=cyan, minimum size=35pt] at (top) {$C_{p_{i+1}}$};
            \node[circle, draw=none, fill=carbon, minimum size=35pt] at (bot) {$C_{p_{i-1}}$};
        
            \draw[-, dashed] (-0.94, 0.33, 0) 
                arc[start angle=160.5, end angle=270, radius=1cm];
            \node[anchor=south east] at (-1.06, -0.75, 0) {109.5$^\circ$};
        
            \begin{scope}[shift={(-4,-2,0)}, scale=1.2] 
                \def\rAxis{2} 
                \draw[->, thick, red!70!black] (0,0,0) -- (\rAxis,0,0) node[anchor=north east]{$x$};
                \draw[->, thick, red!70!black] (0,0,0) -- (0,\rAxis,0) node[anchor=north west]{$y$};
                \draw[->, thick, red!70!black] (0,0,0) -- (0,0,\rAxis) node[anchor=south east]{$z$};
            \end{scope}
        
        \end{tikzpicture}
        }
        \caption{Potential next positions of $C_{p_{i+1}}$ given by the circle in blue, when $C_{p_{i-1}}$ and $C_{p_{i}}$ are fixed.}
        \label{fig:pos_ajout}
    \end{subfigure}
    \begin{subfigure}[t]{0.51\textwidth}
        \centering
        \resizebox{0.5\linewidth}{!}{
        \tdplotsetmaincoords{-10}{0}
        
        \begin{tikzpicture}[tdplot_main_coords, scale=1]
            \definecolor{myBlue}{RGB}{98, 96, 255}
        
            \coordinate (A) at (0, 0, 0); 
            \coordinate (top) at (-2.83, 1, 0);
            \coordinate (bot) at (0, -3, 0);
        
            \coordinate (H1)  at ( 1.4143009,  1.0002470, -2.4493385);
            \coordinate (H2)  at ( 1.4143009,  1.0002470,  2.4493385);
        
            \draw[thick] (A) -- (bot);
            \draw[thick] (A) -- (top);
            \draw[thick] (A) -- (H1);
            \draw[thick] (A) -- (H2);
            
            \draw[red, dashed, thick] (A) -- (0, 2, 0);
        
            \node[circle, draw=none, fill=carbon, minimum size=35pt] at (A) {$C_{p_i}$};
            \node[circle, draw=none, fill=cyan, minimum size=35pt] at (top) {$C_{p_{i+1}}$};
            \node[circle, draw=none, fill=carbon, minimum size=35pt] at (bot) {$C_{p_{i-1}}$};
            \node[circle,fill=hydrogen,minimum size=18pt]  at (H1) {H};
            \node[circle,fill=hydrogen,minimum size=18pt]  at (H2) {H};
        
            \draw[-, dashed] (-0.94, 0.33, 0) 
                arc[start angle=160.5, end angle=270, radius=1cm];
            \node[anchor=south east] at (-1.06, -0.75, 0) {109.5$^\circ$};

            \draw pic[draw,"$109.5^\circ$",angle radius=0.8cm,angle eccentricity=1.5,dashed]{angle = bot--A--H2};
            \draw pic[draw,"$109.5^\circ$",angle radius=0.8cm,angle eccentricity=1.5,dashed]{angle = H1--A--top};
        
            \begin{scope}[shift={(-4,-2,0)}, scale=1.2] 
                \def\rAxis{2} 
                \draw[->, thick, red!70!black] (0,0,0) -- (\rAxis,0,0) node[anchor=north east]{$x$};
                \draw[->, thick, red!70!black] (0,0,0) -- (0,\rAxis,0) node[anchor=north west]{$y$};
                \draw[->, thick, red!70!black] (0,0,0) -- (0,0,\rAxis) node[anchor=south east]{$z$};
            \end{scope}
        
        \end{tikzpicture}
        }
        \caption{Full tetrahedral carbon $C_{p_{i}}$. $C_{p_i}$'s hydrogens positions are known when position of $C_{p_{i+1}}$ is fixed.}
        \label{fig:pos_ajout_with_hydro}
    \end{subfigure}

\caption{Next pattern positions and tetrahedral carbon construction schematic representation.}
\label{fig:pos_ajout_all}
\end{figure}
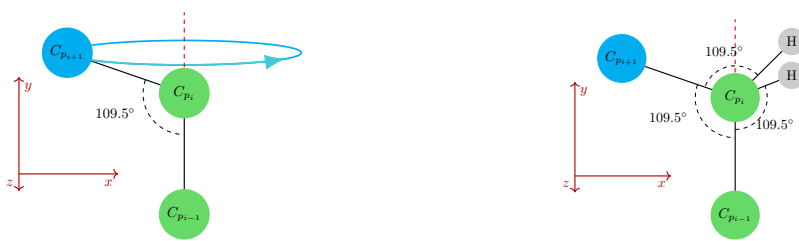

A first, naive approach consists in discretizing the placement circle and independently testing each candidate position for collisions with existing atoms. This results in a large number of queries to the \emph{Spherical Range Emptiness} (or \emph{Threshold Nearest Neighbor}) problem~\cite{bentley1975}, defined as follows: given a finite set $S \subseteq \mathbb{R}^3$, a query point $c \in \mathbb{R}^3$, and a threshold $d > 0$, decide whether $\forall s \in S,\ \|c - s\| \geq d$.

When the two additional hydrogen atoms are taken into account, the number of such queries becomes too large. Instead, we exploit the simple geometric structure of the feasible positions on the placement circle. Each existing atom $s \in S$ induces a forbidden angular interval on the circle corresponding to positions that violate the collision constraint. Moreover, the number of atoms that can generate such intervals is bounded by a constant, since only atoms lying within collision distance of the circle need to be considered. As a result, the set of valid positions can be represented by a bounded set of angular intervals.

For each atom, we compute and exclude the forbidden interval, yielding a set of admissible angular regions. The same construction is applied to the two hydrogen atoms, whose positions depend deterministically on the chosen angle $\theta$ via fixed angular offsets derived from the tetrahedral angle ($109.5^{\circ}$). The final set of valid placements is obtained by intersecting the admissible angular regions of all three atoms we place.

By storing the atoms of $S$ in an appropriate spatial data structure, we can identify the atoms close to the placement circle in constant time, since the number of atoms within a fixed radius is bounded by a constant. Since all subsequent operations involve only a bounded number of atoms and angular intervals, the overall complexity of this procedure is constant. Further details of this construction are provided in Appendix~\ref{sec:SchemaCollision-free}.

\subsection{Selecting the Best Positions}

The admissible angular intervals correspond to infinitely many feasible placements. To obtain a finite set of candidates, we discretize these intervals and guide the selection of angles using distance-based heuristics that estimate how promising a placement is. We consider three such heuristics, each inducing a different discretization strategy. The first relies on the Euclidean distance to the target, which is inexpensive to compute but ignores collision constraints and, therefore, provides only a coarse estimate. The second uses a distance computed on a discretized collision-aware grid, yielding a more accurate evaluation of the remaining path length but at a significantly higher computational cost. The third heuristic combines these two distances in order to balance efficiency and accuracy, leveraging the speed of the Euclidean metric only when relevant.

\subparagraph*{Euclidean distance}
The Euclidean distance from a candidate position to the target vertex $t$ is inexpensive to compute, but ignores obstacles. It performs well when $t$ is in direct line of sight, but may fail in cluttered environments where collisions prevent straight-line connections  as in Figure~\ref{fig:yarzun}.

\begin{figure}[b]
    \centering
    \includegraphics[width=0.2\linewidth]{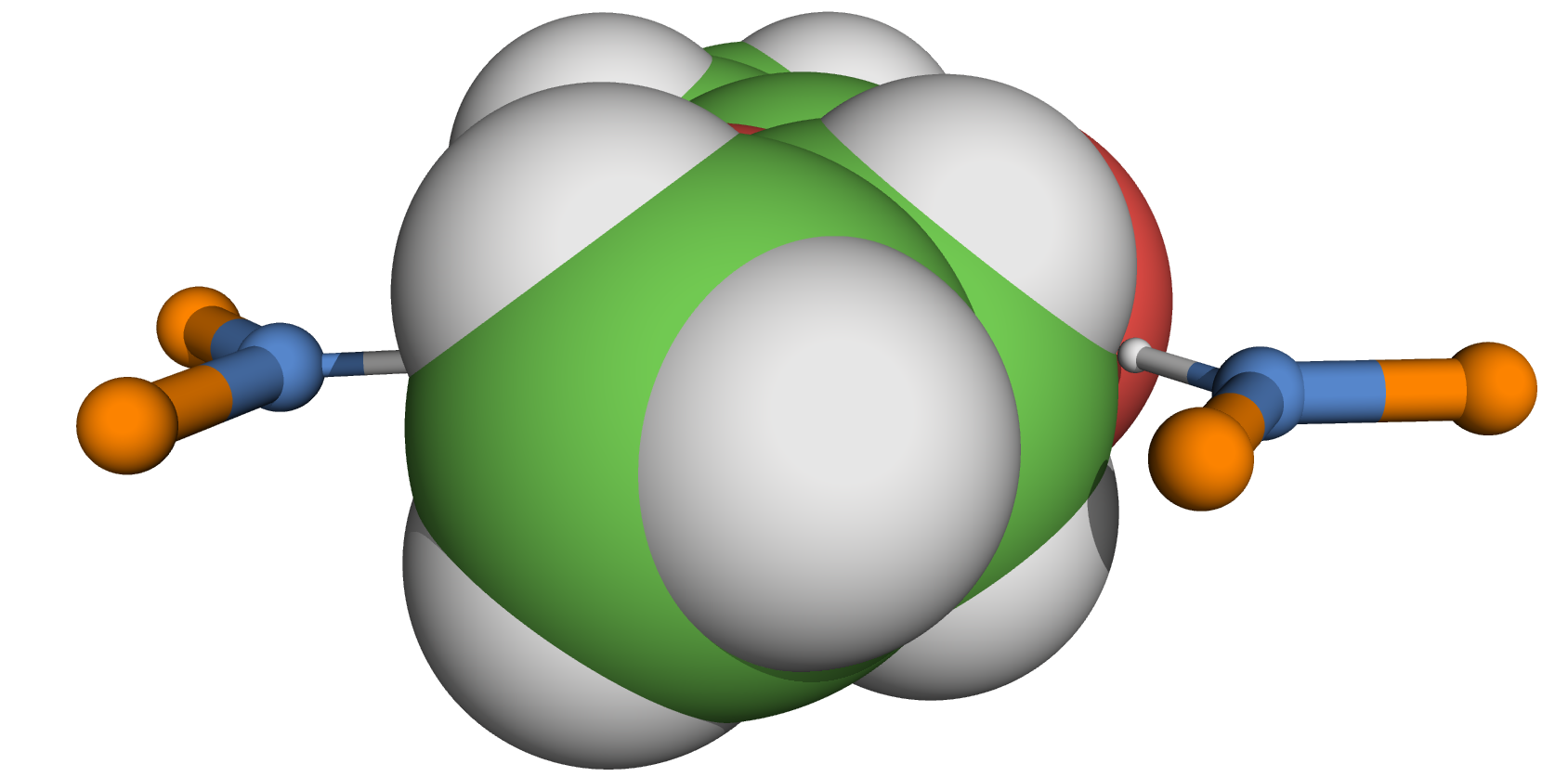}
    \caption{Ethyl propionate and its binding patterns. No molecular path can be found between the orange disconnected vertices when using the Euclidean distance.}
    \label{fig:yarzun}
\end{figure}

To generate candidate positions, we exploit the fact that $\theta^*$, the angle minimizing the Euclidean distance to $t$, can be computed analytically. Let $\Delta\theta$ denote the total angular width of the valid intervals and $n$ the desired number of angular samples. The angular step is $\Delta\theta / n$, bounded below by some minimum angular spacing. The first candidate angle is chosen as closely as possible to $\theta^*$ within the intervals of valid positions. The following positions are computed from the first by incrementing the angular step symmetrically on both sides, up to the limit imposed by the branching factor. If a sampled angle lies in a forbidden interval, it is moved to the closest valid angle in the sampling direction, without reversing direction.

\subparagraph*{Discretized Distance}
To better account for obstacles, we use a collision-aware distance computed on a discretized representation of space. Candidate positions are first selected within the valid angular intervals as follows: each interval initially contains a single uniformly placed sample, and additional samples are iteratively added to the interval whose current subdivision contains the longest segment. This adaptive refinement yields a well-distributed set of candidate positions.

We then evaluate each candidate by estimating its distance to the target using a voxel-based grid that excludes colliding cells. The voxel grid is constructed from the bounding box of the molecule, defined by the minimum and maximum coordinates along each axis and expanded by a margin of $d_{\text{weak}}$. The grid resolution is set to $0.05\,\mathrm{nm}$, and grid vertices that lie closer than a given threshold to any atom (modeled as spheres) are marked as unusable. Each free voxel corresponds to a node in a graph, and edges connect voxels in the 26-neighborhood, with weights equal to the Euclidean distance between voxel centers. The graph is augmented with the candidate positions and the target, each connected to their nearest grid neighbors. Shortest-path distances are computed using the $A^*$ algorithm~\cite{hart1968}. Since all candidates share the same target, we also evaluate the single-source multi-target variant ($SSMTA^*$)~\cite{htoo2013}.

\subparagraph*{Hybrid Distance} We also consider a hybrid heuristic that either uses the Euclidean distance when no obstacle blocks visibility from the current connection pattern to the target, and switches to the discretized distance otherwise. Visibility is determined by checking whether the line segment between the current connection pattern and the target intersects any atom, where atoms are modeled as spheres.

\subsection{Bounding the Search Space}\label{sec:Control_search_space}

\subparagraph{Branching Factor} To limit the combinatorial explosion of the visited partial molecular path, we limit the number of positions retained at each step by a \textbf{branching factor}. Only this number of positions with the best heuristic scores are kept and explored further. This balances completeness and efficiency, allowing us to find valid molecular paths while keeping computation times reasonable.

\subparagraph{Size Pruning}

To prevent unbounded exploration during molecular path construction, we introduce cut-offs that limit the number of patterns in a path. In its simplest form, this limit is a fixed maximum number of patterns defined a priori. Although this guarantees termination, it does not favor the shortest paths. To do so, we refine this limit by keeping track of the length (in number of patterns) of the shortest path found so far between the start and end vertices. Any newly constructed partial path whose length exceeds this bound is discarded. We refer to this rule as the \textbf{Min Length Cut}.

We further strengthen this criterion by estimating the minimum number of additional patterns required to complete the current partial path. This estimate is obtained from the Euclidean distance to the destination and the minimal number of patterns needed to span that distance. If the resulting projected path length exceeds the length of the shortest path found so far, the branch is pruned. We call this rule the \textbf{Projected Length Cut}. Together, these cuts effectively restrict the exploration on the shortest molecular paths.

%% file: section/interconnection.tex
We have shown how to generate a molecular path between two vertices. To construct a complete cage interacting with the substrate, all binding patterns must be interconnected to form a single connected structure. In this section, we introduce the notion of an \emph{interconnection tree}, which formalizes how binding patterns can be linked. We then present an efficient enumeration algorithm for interconnection trees. Combined with the molecular path construction described in the previous section, this approach enables the generation of complete molecular cages.

\subsection{Interconnection Tree}

We model the selection of binding patterns to connect as a graph-theoretic problem, abstracting away geometric and chemical constraints. We consider a \textbf{complete multipartite graph} \(G = (V,E)\), where the vertex set is partitioned into \(V_1,\dots,V_k\) and $E = \{(u,v) \mid i \neq j,\ u \in V_i,\ v \in V_j\}$. Each part \(V_i\) represents a binding pattern, and each vertex in \(V_i\) corresponds to an atom that can serve as an endpoint of a molecular path.

We define the function $\partition$ that maps each vertex $v$ to the part $V_i$ to which it belongs. This definition naturally extends to edges by setting $\partition((u,v)) = (\partition(u), \partition(v))$.
The \emph{quotient graph} of \(G\) (see \cite{bretto_elements_2022}) is then defined as
$
G_{\partition} = (\{V_i\}_{i \in [k]}, \partition(E)).
$

Our objective is to connect all parts in \(G_{\partition}\) while ensuring that each vertex of \(G\) is used at most once, since an atom can participate in only one molecular path. Moreover, we aim to minimize the number of connections, which leads to simpler structures and facilitates chemical synthesis. This leads to the following definition, where the symbol $\sqcup$ denotes disjoint union.

\begin{definition}
An \textbf{interconnection tree} of a multipartite graph
\(G = (V = V_1 \sqcup \dots \sqcup V_k, E)\)
is a set \(T \subseteq E\) such that \(T\) is a matching in \(G\) and
\(\partition(T)\) is a spanning tree of \(G_{\partition}\).
\end{definition}

Examples of interconnection trees are shown in Figure~\ref{fig:sol_motLiant_ACANIL}.

\begin{figure}[t]
    \centering
    \begin{subfigure}[t]{0.49\textwidth}
    \centering
    \resizebox{0.5\linewidth}{!}{
        \begin{tikzpicture}[scale=0.8]
        \node[draw, circle, color=blue, minimum size=2cm] (C1) at (0,4) {};
        \node[anchor=center, color=blue, color=blue] (C1label) at (0, 3) {$V_1$};
        \node[draw, circle, minimum size=0.8cm, inner sep=0pt] (1) at (-0.5, 4) {1};
        \node[draw, circle, minimum size=0.8cm, inner sep=0pt] (2) at (0.5, 4) {2};
    
        \node[draw, circle, color=blue, minimum size=2cm] (C2) at (5,4) {};
        \node[anchor=center, color=blue] (C2label) at (5, 3) {$V_2$};
        \node[draw, circle, minimum size=0.8cm, inner sep=0pt] (3) at (4.5, 4) {3};
        \node[draw, circle, minimum size=0.8cm, inner sep=0pt] (4) at (5.5, 4) {4};
    
        \node[draw, circle, color=blue, minimum size=3cm] (C3) at (0,0) {};
        \node[anchor=center, color=blue] (C3label) at (0, -1.6) {$V_3$};
        \node[draw, circle, minimum size=0.8cm, inner sep=0pt] (5) at (-0.5, 0) {5};
        \node[draw, circle, minimum size=0.8cm, inner sep=0pt] (6) at (-0.5, 1) {6};
        \node[draw, circle, minimum size=0.8cm, inner sep=0pt] (7) at (0.5, 1) {7};
        \node[draw, circle, minimum size=0.8cm, inner sep=0pt] (8) at (0.5, 0) {8};
        \node[draw, circle, minimum size=0.8cm, inner sep=0pt] (9) at (0.5, -1) {9};
        \node[draw, circle, minimum size=0.8cm, inner sep=0pt] (10) at (-0.5, -1) {10};
    
        \node[draw, circle, color=blue, minimum size=3cm] (C4) at (5,0) {};
        \node[anchor=center, color=blue] (C4label) at (5, -1.6) {$V_4$};
        \node[draw, circle, minimum size=0.8cm, inner sep=0pt] (11) at (4.5, 0) {11};
        \node[draw, circle, minimum size=0.8cm, inner sep=0pt] (12) at (4.5, 1) {12};
        \node[draw, circle, minimum size=0.8cm, inner sep=0pt] (13) at (5.5, 1) {13};
        \node[draw, circle, minimum size=0.8cm, inner sep=0pt] (14) at (5.5, 0) {14};
        \node[draw, circle, minimum size=0.8cm, inner sep=0pt] (15) at (5.5, -1) {15};
        \node[draw, circle, minimum size=0.8cm, inner sep=0pt] (16) at (4.5, -1) {16};

        \draw[color = red] (1) -- (6);
        \draw[color = red] (2) -- (3);
        \draw[color = red] (7) -- (11);
    
        \end{tikzpicture}
        }
        \label{fig:sol1_inter_motLiant_ACANIL}
    \end{subfigure}
    \begin{subfigure}[t]{0.49\textwidth}
    \centering
    \resizebox{0.5\linewidth}{!}{
        \begin{tikzpicture}[scale=0.8]
        \node[draw, circle, color=blue, minimum size=2cm] (C1) at (0,4) {};
        \node[anchor=center, color=blue, color=blue] (C1label) at (0, 3) {$V_1$};
        \node[draw, circle, minimum size=0.8cm, inner sep=0pt] (1) at (-0.5, 4) {1};
        \node[draw, circle, minimum size=0.8cm, inner sep=0pt] (2) at (0.5, 4) {2};
    
        \node[draw, circle, color=blue, minimum size=2cm] (C2) at (5,4) {};
        \node[anchor=center, color=blue] (C2label) at (5, 3) {$V_2$};
        \node[draw, circle, minimum size=0.8cm, inner sep=0pt] (3) at (4.5, 4) {3};
        \node[draw, circle, minimum size=0.8cm, inner sep=0pt] (4) at (5.5, 4) {4};
    
        \node[draw, circle, color=blue, minimum size=3cm] (C3) at (0,0) {};
        \node[anchor=center, color=blue] (C3label) at (0, -1.6) {$V_3$};
        \node[draw, circle, minimum size=0.8cm, inner sep=0pt] (5) at (-0.5, 0) {5};
        \node[draw, circle, minimum size=0.8cm, inner sep=0pt] (6) at (-0.5, 1) {6};
        \node[draw, circle, minimum size=0.8cm, inner sep=0pt] (7) at (0.5, 1) {7};
        \node[draw, circle, minimum size=0.8cm, inner sep=0pt] (8) at (0.5, 0) {8};
        \node[draw, circle, minimum size=0.8cm, inner sep=0pt] (9) at (0.5, -1) {9};
        \node[draw, circle, minimum size=0.8cm, inner sep=0pt] (10) at (-0.5, -1) {10};
    
        \node[draw, circle, color=blue, minimum size=3cm] (C4) at (5,0) {};
        \node[anchor=center, color=blue] (C4label) at (5, -1.6) {$V_4$};
        \node[draw, circle, minimum size=0.8cm, inner sep=0pt] (11) at (4.5, 0) {11};
        \node[draw, circle, minimum size=0.8cm, inner sep=0pt] (12) at (4.5, 1) {12};
        \node[draw, circle, minimum size=0.8cm, inner sep=0pt] (13) at (5.5, 1) {13};
        \node[draw, circle, minimum size=0.8cm, inner sep=0pt] (14) at (5.5, 0) {14};
        \node[draw, circle, minimum size=0.8cm, inner sep=0pt] (15) at (5.5, -1) {15};
        \node[draw, circle, minimum size=0.8cm, inner sep=0pt] (16) at (4.5, -1) {16};

        \draw[color = red] (2) -- (7);
        \draw[color = red] (3) -- (8);
        \draw[color = red] (4) -- (13);
        
        \end{tikzpicture}
        }
        \label{fig:sol2_inter_motLiant_ACANIL}
    \end{subfigure}
    \caption{Two interconnection trees in red over a complete multipartite graph with 4 parts in blue, representing the binding patterns of the acetanilide molecule.}
    \label{fig:sol_motLiant_ACANIL}
\end{figure}
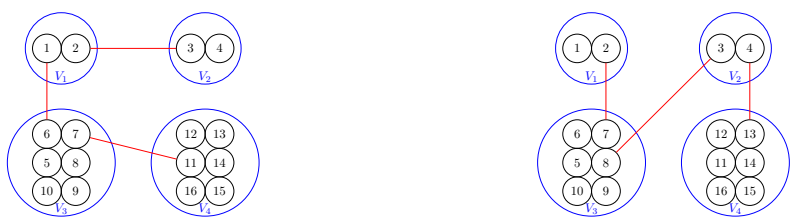

In general, deciding whether a multipartite graph admits an interconnection tree is \(\NP\)-hard, by reduction from the Hamiltonian path problem (see Appendix~\ref{appendix:interconnection_tree}). However, for complete multipartite graphs, existence admits a simple characterization.

\begin{lemma}[Proof in Appendix~\ref{appendix:interconnection_tree}] \label{lemma:complete}
Let \(G = (V = V_1 \sqcup \dots \sqcup V_k, E)\) be a complete multipartite graph.
Then \(G\) admits an interconnection tree if and only if
\(|V| \geq 2(k-1)\).
\end{lemma}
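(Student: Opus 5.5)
The plan is to prove the two directions separately. Necessity is a counting argument; for sufficiency, I would first choose a degree for each part, then realize those degrees by a tree on the parts, then lift the tree to a matching in \(G\). Throughout, I assume every part \(V_i\) is nonempty, since each part represents a binding pattern containing at least one atom. For \(k=1\) the empty set is trivially an interconnection tree and \(|V|\geq 0\) holds, so I take \(k\geq 2\).

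\emph{Necessity.} Suppose \(T\) is an interconnection tree. Since \(\partition(T)\) is a spanning tree of \(G_{\partition}\) on \(k\) vertices, it has \(k-1\) edges. The map \(\partition\) restricted to \(T\) must be injective: if two edges of \(T\) had the same image, that image would count only once, and \(\partition(T)\) would have fewer than \(k-1\) edges. Hence \(|T| = k-1\). Because \(T\) is a matching, its edges cover \(2(k-1)\) distinct vertices of \(G\), which gives \(|V|\geq 2(k-1)\).

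\emph{Sufficiency.} Assume \(|V|\geq 2(k-1)\). I would proceed in three steps.
\begin{enumerate}
\item Choose target degrees \(d_1,\dots,d_k\) with \(1\leq d_i\leq |V_i|\) and \(\sum_i d_i = 2(k-1)\). Start from \(d_i=1\) for all \(i\), which is possible since every part is nonempty. Then distribute the remaining \(k-2\) units greedily, never letting \(d_i\) exceed \(|V_i|\). The total spare capacity is \(\sum_i(|V_i|-1) = |V|-k \geq k-2\), so the distribution always succeeds.
\item Invoke the classical fact that any sequence of \(k\geq 2\) positive integers summing to \(2k-2\) is the degree sequence of some tree on \(k\) labelled vertices. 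This follows from Pr\"ufer codes, where vertex \(i\) appears exactly \(d_i-1\) times, or from a simple induction that attaches a leaf to a vertex of degree at least \(2\). This yields a spanning tree \(\mathcal{T}\) of the complete graph \(G_{\partition}\) in which part \(V_i\) has degree \(d_i\).
\item Lift \(\mathcal{T}\) to \(G\). For each \(i\), assign to the \(d_i\) edges of \(\mathcal{T}\) incident to \(V_i\) pairwise distinct vertices of \(V_i\), which is possible because \(d_i\leq |V_i|\). Each tree edge \((V_i,V_j)\) then becomes an edge \((u,v)\) with \(u\in V_i\), \(v\in V_j\), and this edge exists in \(G\) because \(i\neq j\) and \(G\) is complete multipartite.
\end{enumerate}
The resulting set \(T\) is a matching, since no vertex is assigned twice, and \(\partition(T)=\mathcal{T}\) is a spanning tree, so \(T\) is an interconnection tree.

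The only step needing care is step 2, the degree-sequence realization. Citing Pr\"ufer codes settles it; otherwise I would prove it by induction on \(k\). Some \(d_i\) must equal \(1\), since the sum is below \(2k\), and for \(k\geq 3\) some \(d_j\) must be at least \(2\), since the sum exceeds \(k\). Remove vertex \(i\), decrease \(d_j\) by one, apply the induction hypothesis to the remaining \(k-1\) values, and reattach \(i\) as a leaf of \(j\).
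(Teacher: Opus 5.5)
Your proof is correct, and your necessity direction is the paper's counting argument. One justification there is off. Your argument that $\partition$ is injective on $T$ presupposes $|T|\leq k-1$: a repeated image would make $|T|$ exceed $|\partition(T)|=k-1$, not make $\partition(T)$ smaller. Under the literal set-image definition, injectivity can even fail, e.g.\ two disjoint edges between $V_1$ and $V_2$ when $k=2$. You do not need it, since $|T|\geq|\partition(T)|=k-1$ already gives $2|T|\geq 2(k-1)$ covered vertices. The paper also takes $|T|=k-1$ for granted.

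For sufficiency you take a genuinely different route. The paper inducts on $k$. For $k\geq 3$ the largest part has at least two vertices, so contracting an edge $(u,v)$ from it to another part leaves a complete multipartite graph with $k-1$ nonempty parts and $|V|-2\geq 2(k-2)$ vertices. Lemma~\ref{lemma:contraction} then lifts the inductively obtained tree back to $G$. You instead build the tree globally: you fix degrees $1\leq d_i\leq |V_i|$ with $\sum_i d_i=2(k-1)$, realize them by a tree on the parts, and lift edge for edge.

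The paper's argument is shorter and uses exactly the contraction step that drives the enumeration in Equation~\ref{eq:large_union}. Yours is self-contained, since it does not rely on Lemma~\ref{lemma:contraction}, which the paper states without proof. It also proves more: a spanning tree of $G_{\partition}$ lifts edge for edge to an interconnection tree if and only if each part $V_i$ has degree $d_i\leq |V_i|$ in it, and it then has exactly $\prod_i |V_i|!/(|V_i|-d_i)!$ such lifts. Combined with the Pr\"ufer count $(k-2)!/\prod_i (d_i-1)!$ of trees with a given degree sequence, this yields a closed formula for $|\ICT(G)|$. That formula reproduces the values $162$ and $174\,960$ reported for the instances $(3,3)$ and $(5,3)$ in the experiments.
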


\subsection{Enumeration of Interconnection Trees}

Let \(\ICT(G)\) denote the set of interconnection trees of \(G\). We propose an efficient algorithm to enumerate all elements of \(\ICT(G)\). By efficient, we mean that the delay between the output of two consecutive solutions is bounded by the size of a solution, a standard complexity measure in enumeration problems~\cite{strozecki2021}.

A classical approach to enumeration relies on binary partitioning: solutions are divided according to whether they contain a given edge. Let \(G_{\setminus e}\) denote the graph obtained from \(G\) by removing an edge \(e\), and let \(G_{/e}\) denote the contraction of \(e = (u,v)\), where vertices \(u\) and \(v\) are identified and their corresponding parts are merged.

\begin{lemma}\label{lemma:contraction}
Let \(G\) be a multipartite graph. A set \(T\) is an interconnection tree of \(G\) containing an edge \(e\) if and only if \(T \setminus \{e\}\) is an interconnection tree of \(G_{/e}\).
\end{lemma}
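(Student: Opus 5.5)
The plan is to prove both directions by unwinding the definition of an interconnection tree into two conditions: \(T\) is a matching, and \(\partition(T)\) is a spanning tree of \(G_{\partition}\). Each condition is then compared before and after contracting \(e\). Throughout, \(e=(u,v)\) with \(u\in V_i\), \(v\in V_j\), \(i\neq j\).

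First I fix the reading of \(G_{/e}\). Its parts are the \(V_\ell\) for \(\ell\neq i,j\), together with one merged part \(V_{ij}\) coming from \(V_i\cup V_j\). Edges between \(V_i\) and \(V_j\) become internal to \(V_{ij}\) and so disappear. The identified vertex \(w\) stands for the two atoms already consumed by \(e\), so I treat it as unavailable: no edge of an interconnection tree of \(G_{/e}\) may touch it. This is the reading I will argue is intended, since a molecular path endpoint cannot be reused. It is also needed for the converse direction: otherwise an edge at \(w\) together with \(e\) would break the matching property. I will state this convention explicitly at the start of the proof. With it, the edges of \(G_{/e}\) available to \(T\setminus\{e\}\) are exactly the edges of \(G\) that avoid \(u\) and \(v\) and do not join \(V_i\) to \(V_j\).

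\emph{Matching part.} \(T\) is a matching containing \(e\) iff \(T\setminus\{e\}\) is a matching none of whose edges meets \(u\) or \(v\), which by the convention is exactly a matching in \(G_{/e}\). There is one extra point. In the forward direction, no edge of \(T\setminus\{e\}\) joins \(V_i\) to \(V_j\). Indeed, \(\partition(T)\) is a spanning tree with \(k-1\) edges, so \(T\) has \(k-1\) edges whose images are distinct tree edges; hence no other edge of \(T\) has image \((V_i,V_j)\). This shows \(T\setminus\{e\}\subseteq E(G_{/e})\).

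\emph{Tree part.} \(\partition(G_{/e})\) is exactly \(G_{\partition}\) with the edge \((V_i,V_j)\) contracted. I then use the standard fact that a set \(F\) of edges of a graph \(H\) containing \(f\) is a spanning tree of \(H\) iff \(F\setminus\{f\}\) is a spanning tree of \(H/f\). Connectivity is preserved under contraction and un-contraction, and the edge count goes from \(k-1\) on \(k\) vertices to \(k-2\) on \(k-1\) vertices. Applying this with \(H=G_{\partition}\) and \(f=(V_i,V_j)\) gives both directions. For the converse, every edge of \(T\setminus\{e\}\) has an image in \(G_{\partition}\) different from \((V_i,V_j)\), so adding back \(e\) yields a tree image with \(k-1\) distinct edges.

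I expect the main obstacle to be the bookkeeping of images rather than anything deep. One has to check that the map from \(T\setminus\{e\}\) to edges of \(\partition(G_{/e})\) is injective and agrees with contracting \((V_i,V_j)\) in \(\partition(T)\), with no parallel edges or loops appearing. One also has to make the convention on \(w\) precise enough that the converse direction is immediate.
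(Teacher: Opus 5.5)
The paper states this lemma without proof and treats it as evident, so there is no competing argument to compare with. Your split into the matching condition and the spanning-tree condition, combined with the standard contraction fact for spanning trees, is the natural proof and is correct under the conventions the paper uses tacitly. You are right that the converse fails if $u$ and $v$ are literally identified into a usable vertex. Your reading of $G_{/e}$ is the paper's: the proof of Lemma~\ref{lemma:complete} uses $|V(G_{/(u,v)})| = |V(G)|-2$, and the delay analysis says the vertex count drops by two. So define $G_{/e}$ as deleting both $u$ and $v$ and merging $(V_i\cup V_j)\setminus\{u,v\}$ into one part. Keeping an identified but forbidden vertex $w$ would make the count drop by one only.

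One step has to be stated as a convention rather than derived. If $\partition(T)$ is read as a plain image set, then ``$\partition(T)$ is a spanning tree with $k-1$ edges'' does not imply that $T$ has $k-1$ edges with distinct images. Take $V_1=\{a,a'\}$, $V_2=\{b,b'\}$ and $T=\{(a,b),(a',b')\}$. Then $T$ is a matching whose image is the spanning tree $\{(V_1,V_2)\}$, yet $T\setminus\{(a,b)\}$ is an edge inside the single merged part of $G_{/(a,b)}$. So the lemma itself is false under that reading. What you need is that $\partition$ is injective on $T$, i.e.\ $\partition(T)$ counted with multiplicity is a tree. The paper assumes this implicitly when it writes ``It has $k-1$ edges, because $\partition(T)$ is a spanning tree'' in the proof of Lemma~\ref{lemma:complete}, and its count of $162$ trees for the $(3,3)$ instance matches this reading. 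Put it in the definition, and your forward-direction claim that no other edge of $T$ joins $V_i$ to $V_j$ becomes immediate.

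A minor correction: $\partition(G_{/e})$ is in general only a spanning subgraph of $G_{\partition}$ with $(V_i,V_j)$ contracted, not equal to it. Deleting $u$ and $v$ can remove every edge between $V_i$ and some $V_\ell$. This does not hurt your argument. Your matching step already places the images of $T\setminus\{e\}$ inside $\partition(G_{/e})$, and being a spanning tree depends only on that edge set and the common vertex set.
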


Let \(S\) be a set of edge sets and let $e$ be an edge \(e\), we write
\(S \oplus e = \{ s \cup \{e\} \mid s \in S\}\).
By Lemma~\ref{lemma:contraction}, we can decompose $\ICT(G)$ into two disjoint sets as follows
$$
\ICT(G) = \ICT(G_{\setminus e}) \;\sqcup\; \ICT(G_{/e}) \oplus e
.$$ A similar decomposition is used for enumerating spanning trees~\cite{uno2015}, yielding an algorithm with linear delay and constant amortized delay. However, such algorithm requires efficiently deciding whether the recursive subproblems are non-empty in order to avoid useless recursion (a technique known as \emph{flashlight search}). In our setting, even if \(G\) is complete multipartite, the graph \(G_{\setminus e}\) is not, and deciding whether \(\ICT(G_{\setminus e})\) is empty becomes \(\NP\)-complete.

To address this issue, we refine the decomposition so that all recursive subproblems remain complete multipartite graphs, allowing us to apply Lemma~\ref{lemma:complete}. We fix an arbitrary ordering of the vertices of \(G\) that respects the partition: if \(u \in V_i\), \(v \in V_j\), and \(i < j\), then \(u < v\). For a vertex \(u\), we let \(G_{\geq u}\) denote the graph obtained from \(G\) by removing all vertices \(u' < u\). This leads to the following decomposition:
\begin{equation}
\ICT(G) = \bigsqcup_{\substack{(u,v) \in E \\ u \in V_1}} \ICT\bigl((G_{/(u,v)})_{\geq u}\bigr) \oplus (u,v).
\label{eq:large_union}
\end{equation}

Equation~\ref{eq:large_union} can be proved inductively. By induction hypothesis, we have the equation with an union over some initial segment $I$  of the edges and an additional term equal to $ICT(G_{\setminus I})$. We let $e$ be the first edge in the order after $I$ and we apply Lemma~\ref{lemma:contraction} to $ICT(G_{\setminus I})$ and the edge $e$ to prove the induction. 

Equation~\ref{eq:large_union} directly yields a recursive enumeration algorithm. With appropriate data structures, we can maintain in constant time whether each part is non-empty and whether the condition of Lemma~\ref{lemma:complete} is satisfied.

\begin{theorem}[Proof in Appendix~\ref{appendix:interconnection_tree}]
Let \(G\) be a complete multipartite graph with \(k\) parts. Then \(\ICT(G)\) can be enumerated with worst-case delay \(O(k)\) and amortized delay \(O(1)\).
\end{theorem}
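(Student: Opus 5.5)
We use the decomposition of Equation~\ref{eq:large_union} as a recursive flashlight search, in which every recursive call is made only on a non-empty subproblem.

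\textbf{Algorithm.} A call receives a complete multipartite graph $H$ with $k'$ parts. It stores the current partial solution, a set of edges, on a stack.
\begin{itemize}
\item If $k'=1$, the call outputs the stored edges, since the empty set is the unique interconnection tree.
\item Otherwise, let $u$ range over the vertices of the first part $V_1$ in the fixed order, and $v$ over all vertices outside $V_1$. For each such pair the call forms $H'=(H_{/(u,v)})_{\geq u}$.
\item $H'$ is again complete multipartite. Contracting $(u,v)$ merges the parts of $u$ and $v$, and the contracted vertex is consumed because it can no longer be an endpoint. Deleting the vertices preceding $u$ only shrinks parts.
\item $H'$ has $k'-1$ parts, provided no part becomes empty. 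If a part empties, that part can no longer be connected and the branch is discarded.
\item By Lemma~\ref{lemma:complete}, $\ICT(H')\neq\emptyset$ if and only if all parts are non-empty and $|V(H')|\geq 2(k'-2)$. The call recurses on $H'$ exactly when this holds.
\end{itemize}
Correctness, namely that every tree is produced exactly once, follows from Lemma~\ref{lemma:contraction} and the disjointness in Equation~\ref{eq:large_union}.

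\textbf{Data structures.} We keep one array of vertices sorted in the fixed order, with a pointer marking the removal prefix. For each part we keep a count of its remaining vertices, together with $|V|$ and $k'$.
\begin{itemize}
\item Contraction removes $u$ and $v$ and merges two parts. Merging is done by relabeling only the count, with a union-find style merge of counters.
\item The cut $\geq u$ removes the prefix of $V_1$ before $u$, which only decrements the count of $V_1$.
\item Each such update is $O(1)$ amortized and is undone on backtracking. With a careful trail of modifications, undoing costs $O(1)$ worst case per step.
\item The emptiness test is then $O(1)$.
\end{itemize}

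\textbf{Delay.} Every call is on a non-empty instance, so the recursion tree has height $k-1$ and every leaf is a solution.
\begin{itemize}
\item The main obstacle is that a node may scan many pairs $(u,v)$ that fail the test, which would break the delay bound. I would first show that failures are rare. The condition $|V|\geq 2(k'-2)$ depends only on $|V|$, and every pair decreases $|V|$ by the same amount apart from the prefix cut.
\item Hence, as $u$ advances, the feasibility condition is monotone: once it fails for some $u$, it fails for every later $u$, and the loop can stop. Similarly, the emptiness failure occurs only when $v$'s part is a singleton and would be fully consumed.
\item If that singleton argument looks off, I would enumerate $v$ part by part and skip a whole infeasible part in $O(1)$ by keeping the list of parts with size $\geq 2$. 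Every tested pair then either succeeds or terminates the loop, so each node spends $O(1)$ per child plus $O(1)$ extra.
\item Between two consecutive outputs, the search climbs up at most $k$ levels and descends at most $k$ levels, each step in $O(1)$. This gives worst-case delay $O(k)$.
\end{itemize}

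\textbf{Amortized delay.} Every internal node has at least one child, and the total work is $O(\text{number of nodes})$.
\begin{itemize}
\item The standard Uno-type argument compares the number of nodes with the number of leaves. Whenever a node at depth $d$ has $k'\geq 3$ parts, it has at least two feasible children, since there are at least two choices of $v$ when the condition holds with slack. The exceptions form chains of bounded length.
\item Hence internal nodes with one child can be charged to leaves, so the number of nodes is $O(|\ICT(G)|)$. Verifying the at-least-two-children claim at tight instances, where $|V|=2(k'-1)$, is the delicate point. There the structure is forced, and I would argue directly that each such node still has $\Omega(1)$ branching, or else that its single-child chains have length $O(1)$.
\item Outputting leaves in alternating pre- and post-order gives $O(1)$ amortized delay.
\end{itemize}
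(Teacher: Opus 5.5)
Your overall plan is the paper's: flashlight search along Equation~\ref{eq:large_union}, Lemma~\ref{lemma:complete} as the emptiness test, the counting condition being monotone in $u$ (the paper's margin $m$), and depth $k$ with $O(1)$ work per level for the worst-case delay. The gap is in the amortized bound. It rests entirely on the claim that every node with $k'\ge 3$ parts has two feasible children. You justify this by ``slack'', which is not the reason, and then defer exactly this step as the delicate point, even allowing single-child chains as a fallback.

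The missing argument is short, and tightness plays no role in it. Take $u$ the \emph{first} vertex of $V_1$. For every $v\notin V_1$ the child has $|V(H)|-2\ge 2(k'-2)$ vertices, which is just the parent's own condition. So the only possible failure is that the merged part $(V_1\setminus\{u\})\cup(V_j\setminus\{v\})$ is empty. If $|V_1|\ge 2$, all $|V(H)|-|V_1|\ge k'-1\ge 2$ choices of $v$ succeed. If $|V_1|=1$, the other $k'-1$ parts hold at least $2k'-3>k'-1$ vertices, so some $V_j$ has two vertices and both choices $v\in V_j$ succeed. There are no exceptions.

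It follows that two-part nodes number at most $|\ICT(G)|$, and nodes with at least $3$ parts are fewer than two-part nodes, so the recursion tree has fewer than $3|\ICT(G)|$ nodes. Every tested pair either yields a child or ends the loop, so total work is $O(|\ICT(G)|)$ after preprocessing. No alternating output is needed, since only leaves are solutions.

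This child-counting route differs from the paper's. The paper charges each node's cost to its solutions via $|\ICT(G)|\ge 2|\ICT((G_{/(u,v)})_{\geq u})|$ for $|V_1|\ge 2$, and assumes $V_1$ stays the largest part. Your version needs no such invariant, and that invariant is not automatically preserved: when $u$ is the last vertex of $V_1$, the merged part has only $|V_j|-1$ vertices and can be smaller than another part.

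Three smaller points also need fixing.
\begin{itemize}
\item An emptiness failure needs both that $u$ is the last vertex of $V_1$ and that $|V_j|=1$. Singleton parts may therefore be skipped only for that last $u$; skipping them for earlier $u$ loses solutions.
\item Your test discards $H'$ whenever some part is empty. When $k'=2$, however, the single part of $H'$ is legitimately empty if both remaining parts are singletons. As written, a tight instance $|V|=2(k-1)$ produces no output at all. Non-emptiness should be required only when $H'$ has at least two parts.
\item Your data structure does not deliver the worst-case $O(1)$ per step that the $O(k)$ delay needs. After a contraction, the first part contains vertices of $V_j$ lying elsewhere in your global array. A prefix pointer then cannot describe removal of its leading vertices, and ordered iteration over it is unavailable. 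Union-find with amortized $O(1)$ updates also gives no worst-case guarantee.
\end{itemize}
Union-find is unnecessary, because the only merge ever performed is ``the first part absorbs $V_j$''. Store the first part as a linked list of remainders of original parts, each itself a linked list, as the paper does, plus a list of the other parts of size at least $2$. Then contraction, prefix removal, the singleton skip, and their undo are all worst-case $O(1)$.
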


In practice, the number of interconnection trees may be very large. To prioritize trees that are more likely to yield short molecular paths, we associate a weight to each edge, corresponding to the spatial distance between the two atoms. The weight of an interconnection tree is defined as the sum of the weights of its edges.

Ideally, one would enumerate interconnection trees in non-decreasing order of weight. However, computing a minimum-weight interconnection tree is already \(\NP\)-hard. When the number of interconnection trees remains moderate, a practical strategy, called \textbf{ordered}, is therefore to enumerate all trees, sort them by weight, and subsequently construct molecular paths. This approach is justified because we show in Section~\ref{sec:evalGlobalOnSub} that the cost of generating interconnection trees is several orders of magnitude smaller than that of computing molecular paths.

%% file: section/eval.tex
All experiments were conducted on an AMD Ryzen~7~PRO~250 CPU with 30\,GB of RAM. 
The code was compiled with GCC~13.3.0 using the \texttt{-O2} optimization level. Unless stated otherwise, all experiments use the same parameter settings. Pruning is set to \emph{Projected Length Cut}, the branching factor to~3, the number of angular samples to~12, the angular separation to~15$^{\circ}$, the distance function to \emph{Hybrid}, and the grid step to~$0.05\,\mathrm{nm}$. The maximum path length is fixed to~15, and at most $1{,}000{,}000$ solutions are generated.
The source code and datasets used in this work are available in our GitHub repository\footnote{https://github.com/NoeDemange/MolecularCagesGeneration}.

The initial substrates are extracted from the \emph{Cambridge Structural Database} (CSD)~\cite{CSD}, with the help of a chemist, we have selected 20 small organic molecules (fewer than 100 atoms) with varying characteristics in terms of size and shape. In what follows, we use only a representative subset of the molecules tested. Each experiment takes as input a substrate together with its associated binding patterns. Substrates are identified by their CSD Database Identifier in uppercase; when a lowercase letter is appended, it indicates that additional paths have been introduced so that only a single molecular path remains to be constructed.

To the best of our knowledge, there is currently no existing algorithmic approach for generating molecular cages from a given substrate, which prevents direct comparison with prior work. Our evaluation is therefore conducted on instances proposed by a chemist. While molecular cages have been experimentally constructed for some substrates, these structures are not generated computationally, and no standardized metrics or computational baselines are available to quantitatively compare their quality with our approach.

\subsection{Molecular Path Generation}

In this section, we evaluate the influence of the main parameters involved in the generation of a single molecular path.

\subparagraph*{Pruning}

We first compare the baseline algorithm with the two pruning strategies introduced in Section~\ref{sec:Control_search_space}, namely the \emph{Min Length Cut} and the \emph{Projected Length Cut}, in order to assess their impact on computational efficiency.

The results are reported in Table~\ref{tab:cut-and-branching}. As expected, both cuts significantly reduce the number of explored solutions by favoring shorter paths. The Projected Length Cut explores the fewest partial paths and consistently achieves the lowest running time, making it the most effective pruning strategy in practice.

\begin{table}[b]
\centering
\setlength{\tabcolsep}{1.2pt}
\tiny
\caption{Effect of pruning strategies (left table) and branching factor (right table) on path generation performance in time and quality (length and NRMSD).}
\label{tab:cut-and-branching}
\begin{subtable}[t]{0.49\textwidth}
\centering
\begin{tabular}{lllll}
\scriptsize\textbf{Instance} &
\scriptsize\textbf{Cut} &
\scriptsize\textbf{\#Paths} &
\scriptsize\textbf{\shortstack{\#Partial\\ paths}} &
\scriptsize\textbf{\shortstack{Time\\ (ms)}} \\
\hline
\T
\scriptsize\multirow{3}{*}{ABABELa}
 & \scriptsize Baseline   & \scriptsize 45 & \scriptsize 30\,839 & \scriptsize 145 \\
 & \scriptsize Min Length        & \scriptsize10 & \scriptsize7\,009  & \scriptsize 35  \\
 & \scriptsize Projected Length & \scriptsize10 & \scriptsize6\,225  & \scriptsize 22  \\
\hline
\T
\scriptsize\multirow{3}{*}{BAHSUYa}
 & \scriptsize Baseline   & \scriptsize 5 & \scriptsize 138 & \scriptsize 1.8 \\
 & \scriptsize Min Length        & \scriptsize 5 & \scriptsize 120 & \scriptsize 1.8 \\
 & \scriptsize Projected Length & \scriptsize 5 & \scriptsize 120 & \scriptsize 1.8 \\
\hline
\T
\scriptsize\multirow{3}{*}{YILLAGc}
 & \scriptsize Baseline   & \scriptsize 41 & \scriptsize 5\,662 & \scriptsize 47 \\
 & \scriptsize Min Length    & \scriptsize 10 & \scriptsize 1\,276 & \scriptsize 12 \\
 & \scriptsize Projected Length  & \scriptsize 10 & \scriptsize 1\,134 & \scriptsize 8  \\
\hline
\end{tabular}
\end{subtable}
\hfill
\begin{subtable}[t]{0.49\textwidth}
\centering
\begin{tabular}{llllll}
\scriptsize\textbf{Instance} &
\scriptsize\textbf{\#Pos.} &
\scriptsize\textbf{\shortstack{Path\\ length}} &
\scriptsize\textbf{\#Paths} &
\scriptsize\textbf{NRMSD} &
\scriptsize\textbf{\shortstack{Time\\ (ms)}}  \\
\hline
\T
\scriptsize \multirow{4}{*}{ABABELa}
 & \scriptsize 1 & \scriptsize \emph{N/A} & \scriptsize \emph{N/A} & \scriptsize \emph{N/A} & \scriptsize 2.5 \\
 & \scriptsize 2 & \scriptsize 10 & \scriptsize 1 & \scriptsize 0.87 & \scriptsize 6.3 \\
 & \scriptsize 3 & \scriptsize 10 & \scriptsize 10 & \scriptsize 0.53 & \scriptsize 22 \\
 & \scriptsize 4 & \scriptsize 10 & \scriptsize 67 & \scriptsize 0.33 & \scriptsize 99 \\
\hline
\T
\scriptsize \multirow{4}{*}{BAHSUYa}
 & \scriptsize 1 & \scriptsize \emph{N/A} & \scriptsize \emph{N/A} & \scriptsize \emph{N/A} & \scriptsize 2.3 \\
 & \scriptsize 2 & \scriptsize \emph{N/A} & \scriptsize \emph{N/A} & \scriptsize \emph{N/A} & \scriptsize 1.7 \\
 & \scriptsize 3 & \scriptsize 5 & \scriptsize 2 & \scriptsize 0.82 & \scriptsize 1.8 \\
 & \scriptsize 4 & \scriptsize 5 & \scriptsize 3 & \scriptsize 0.82 & \scriptsize 10 \\
\hline
\T
\scriptsize \multirow{4}{*}{YILLAGc}
 & \scriptsize 1 & \scriptsize \emph{N/A} & \scriptsize \emph{N/A} & \scriptsize \emph{N/A} & \scriptsize 1.8 \\
 & \scriptsize 2 & \scriptsize \emph{N/A} & \scriptsize \emph{N/A} & \scriptsize \emph{N/A} & \scriptsize 2.1 \\
 & \scriptsize 3 & \scriptsize 8 & \scriptsize 5 & \scriptsize 0.76 & \scriptsize 6.9 \\
 & \scriptsize 4 & \scriptsize 7 & \scriptsize 32 & \scriptsize 0.56 & \scriptsize 15 \\
\hline
\end{tabular}
\end{subtable}
\end{table}

\subparagraph*{Branching factor}

We then study the impact of the branching factor of the path generation algorithm by varying the number of retained candidate positions from 1 to 4. The results are summarized in Table~\ref{tab:cut-and-branching}.

When only one or two positions are retained, no valid solution is found, as the construction of paths lacks sufficient degrees of freedom to avoid obstacles and to respect the margins of the final attachment to $t$. Increasing the branching factor to four significantly increases both the number of solutions and the computation time. A branching factor of $3$ offers the best compromise, yielding a reasonable number of solutions, moderate computational cost, and good path quality in terms of NRMSD. Consequently, we retain three positions in the remainder of this work.

\subparagraph*{Number of angular samples} 

We next evaluate the influence of the number of angular samples used to discretize the angular intervals when selecting candidate positions. We test values of 24, 12, 8, and 6 samples; these values have been chosen by dividing $360^\circ$ by $15^\circ$, $30^\circ$, $45^\circ$ and $60^\circ$ respectively. The results are shown in Table~\ref{tab:sweep-comparison}. 

Increasing the number of samples does not systematically improve solution quality, while reducing it does not necessarily reduce the computation time. Using too many samples often yields candidate positions that are geometrically close and offer limited diversity. Conversely, too few samples reduce the accuracy of position selection and may lead to longer paths. Moreover, when the number of samples is less than $8$, it may be smaller than the number of valid angular intervals, and some intervals may contain no candidate position, further limiting exploration. In practice, we use 12 angular samples, although the choice between 8 and 12 remains inconclusive based on our experiments. A more extensive empirical study would be required to determine the optimal number of samples or to adapt this parameter dynamically to the input instance.

\subparagraph*{Minimum angular spacing} 

We then analyze the impact of the minimum angular spacing between candidate positions. The tested values are 5, 10, 15, and 20 degrees, with results reported in Table~\ref{tab:sweep-comparison}. 
The choice of angular spacing has a limited influence on performance. However, a separation of 15 degrees consistently provides a good compromise between path quality and computational cost across all tested instances.

\begin{table}[t]
\centering
\caption{Influence of discretization parameters on path generation performance. Number of considered positions (\#Pos.) left and  angular spacing right.}
\label{tab:sweep-comparison}
\begin{subtable}[t]{0.49\textwidth}
\centering
\setlength{\tabcolsep}{2pt}
\tiny
\label{tab:discretized-position}
\begin{tabular}{llllll}
\scriptsize\textbf{Instance} &
\scriptsize\textbf{\#Pos.} &
\scriptsize\textbf{\shortstack{Path\\ length}} &
\scriptsize\textbf{\#Paths} &
\scriptsize\textbf{NRMSD} &
\scriptsize\textbf{\shortstack{Time\\ (ms)}} \\
\hline
\T
\scriptsize \multirow{4}{*}{ABABELa}
 & \scriptsize 24 & \scriptsize 10 & \scriptsize 3  & \scriptsize 0.92 & \scriptsize 25 \\
 & \scriptsize 12 & \scriptsize 10 & \scriptsize 10 & \scriptsize 0.53 & \scriptsize 24 \\
 & \scriptsize 8  & \scriptsize 9  & \scriptsize 2  & \scriptsize 1.21 & \scriptsize 17 \\
 & \scriptsize 6  & \scriptsize 11 & \scriptsize 6  & \scriptsize 0.41 & \scriptsize 44 \\
\hline
\T
\scriptsize \multirow{4}{*}{BAHSUYa}
 & \scriptsize 24 & \scriptsize 5 & \scriptsize 1 & \scriptsize 0.82 & \scriptsize 2.6 \\
 & \scriptsize 12 & \scriptsize 5 & \scriptsize 2 & \scriptsize 0.82 & \scriptsize 2.1 \\
 & \scriptsize 8  & \scriptsize 6 & \scriptsize 1 & \scriptsize 0.80 & \scriptsize 3.0 \\
 & \scriptsize 6  & \scriptsize 7 & \scriptsize 6 & \scriptsize 0.74 & \scriptsize 5.1 \\
\hline
\T
\scriptsize \multirow{4}{*}{YILLAGc}
 & \scriptsize 24 & \scriptsize 8 & \scriptsize 5 & \scriptsize 0.24 & \scriptsize 5.8 \\
 & \scriptsize 12 & \scriptsize 8 & \scriptsize 5 & \scriptsize 0.76 & \scriptsize 8.2 \\
 & \scriptsize 8  & \scriptsize 7 & \scriptsize 6 & \scriptsize 0.45 & \scriptsize 4.9 \\
 & \scriptsize 6  & \scriptsize 7 & \scriptsize 1 & \scriptsize 1.16 & \scriptsize 5.2 \\
\hline
\end{tabular}
\end{subtable}
\hfill
\begin{subtable}[t]{0.5\textwidth}
\centering
\setlength{\tabcolsep}{1.6pt}
\tiny
\label{tab:threshold-angle}
\begin{tabular}{llllll}
\scriptsize\textbf{Instance} &
\scriptsize\textbf{\shortstack{Angular\\ spacing}} &
\scriptsize\textbf{\shortstack{Path\\ length}} &
\scriptsize\textbf{\#Paths} &
\scriptsize\textbf{NRMSD} &
\scriptsize\textbf{\shortstack{Time\\ (ms)}} \\
\hline
\T
\scriptsize \multirow{4}{*}{ABABELa}
 & \scriptsize 5  & \scriptsize \emph{N/A} & \scriptsize \emph{N/A} & \scriptsize \emph{N/A} & \scriptsize 2.4 \\
 & \scriptsize 10 & \scriptsize 10 & \scriptsize 12 & \scriptsize 0.44 & \scriptsize 26 \\
 & \scriptsize 15 & \scriptsize 10 & \scriptsize 10 & \scriptsize 0.53 & \scriptsize 22 \\
 & \scriptsize 20 & \scriptsize 10 & \scriptsize 9  & \scriptsize 0.34 & \scriptsize 21 \\
\hline
\T
\scriptsize \multirow{4}{*}{BAHSUYa}
 & \scriptsize 5  & \scriptsize 6 & \scriptsize 4 & \scriptsize 0.51 & \scriptsize 2.1 \\
 & \scriptsize 10 & \scriptsize 6 & \scriptsize 5 & \scriptsize 0.51 & \scriptsize 1.6 \\
 & \scriptsize 15 & \scriptsize 5 & \scriptsize 2 & \scriptsize 0.81 & \scriptsize 1.4 \\
 & \scriptsize 20 & \scriptsize 5 & \scriptsize 1 & \scriptsize 1.29 & \scriptsize 1.7 \\
\hline
\T
\scriptsize \multirow{4}{*}{YILLAGc}
 & \scriptsize 5  & \scriptsize 8 & \scriptsize 2 & \scriptsize 0.43 & \scriptsize 3.2 \\
 & \scriptsize 10 & \scriptsize 7 & \scriptsize 3 & \scriptsize 0.76 & \scriptsize 2.6 \\
 & \scriptsize 15 & \scriptsize 8 & \scriptsize 5 & \scriptsize 0.76 & \scriptsize 8.0 \\
 & \scriptsize 20 & \scriptsize 7 & \scriptsize 1 & \scriptsize 1.03 & \scriptsize 4.9 \\
\hline
\end{tabular}
\end{subtable}
\end{table}

\subparagraph*{Distance types} 
Finally, we evaluate the impact of the distance function used to guide path generation. We compare the three proposed strategies: the Euclidean distance, the discretized distance, and the Hybrid distance which combines both. As an alternative implementation of $A^*$ to compute the discretized distance, we also evaluate $SSMTA^*$, which computes the same distance but to all candidate positions at the same time. Since both methods return identical distances, we report their results in a single row, listing the execution times of both algorithms.

\begin{table}[t]
\centering
\setlength{\tabcolsep}{3pt}
\tiny
\caption{Comparison of distance functions in terms of path quality and computational cost. For the discretized distance, we report the runtime of $A^*$ followed by that of $SSMTA^*$.}
\label{tab:sweep-stats-type_distance}
\begin{tabular}{llccccc}
\scriptsize\textbf{Instance} &
\scriptsize\textbf{Distance} &
\scriptsize\textbf{\#Results} &
\scriptsize\textbf{Path length} &
\scriptsize\textbf{\#Paths} &
\scriptsize\textbf{NRMSD} &
\scriptsize\textbf{Time (ms)} \\
\hline
\T
\scriptsize \multirow{3}{*}{ABABELa}
 & \scriptsize A* / SSMTA* & \scriptsize 8 & \scriptsize 10 & \scriptsize 6 & \scriptsize 0.68 & \scriptsize 100 / 107 \\
 & \scriptsize Euclidean & \scriptsize 12 & \scriptsize 10 & \scriptsize 12 & \scriptsize 0.64 & \scriptsize 11 \\
 & \scriptsize HYBRID & \scriptsize 10 & \scriptsize 10 & \scriptsize 10 & \scriptsize 0.53 & \scriptsize 23 \\
\hline
\T
\scriptsize \multirow{3}{*}{BAHSUYa}
 & \scriptsize A* / SSMTA* & \scriptsize 7 & \scriptsize 5 & \scriptsize 2 & \scriptsize 0.90 & \scriptsize 10 / 9 \\
 & \scriptsize Euclidean & \scriptsize 2 & \scriptsize 5 & \scriptsize 2 & \scriptsize 0.79 & \scriptsize 0.32 \\
 & \scriptsize HYBRID & \scriptsize 5 & \scriptsize 5 & \scriptsize 2 & \scriptsize 0.82 & \scriptsize 1.9 \\
\hline
\T
\scriptsize \multirow{3}{*}{YARZUN03a}
 & \scriptsize A* / SSMTA* & \scriptsize 107 & \scriptsize 11 & \scriptsize 29 & \scriptsize 0.39 & \scriptsize 78 / 76 \\
 & \scriptsize Euclidean & \scriptsize 0 & \scriptsize NA & \scriptsize NA & \scriptsize NA & \scriptsize 0.24 \\
 & \scriptsize HYBRID & \scriptsize 93 & \scriptsize 11 & \scriptsize 26 & \scriptsize 0.15 & \scriptsize 29 \\
\hline
\T
\scriptsize \multirow{3}{*}{YILLAGc}
 & \scriptsize A* / SSMTA* & \scriptsize 8 & \scriptsize 7 & \scriptsize 2 & \scriptsize 0.89 & \scriptsize 13 / 14 \\
 & \scriptsize Euclidean & \scriptsize 1 & \scriptsize 5 & \scriptsize 1 & \scriptsize 0.93 & \scriptsize 0.18 \\
 & \scriptsize HYBRID & \scriptsize 10 & \scriptsize 8 & \scriptsize 5 & \scriptsize 0.76 & \scriptsize 8.3 \\
\hline
\end{tabular}
\end{table}

As shown in Table~\ref{tab:sweep-stats-type_distance}, the Euclidean distance is by far the fastest to compute, but it fails to consistently produce valid paths, particularly in the presence of obstacles, as illustrated by the absence of results for instance YARZUN03a (shown in Figure~\ref{fig:yarzun}). In contrast, both $A^*$ and $SSMTA^*$ are significantly more robust, at the cost of substantially higher computation times. Despite its theoretical advantages, $SSMTA^*$ does not yield a noticeable speed-up over $A^*$ in our experiments. 

The Hybrid distance offers the best compromise between robustness and efficiency. It successfully generates valid paths in most instances while remaining considerably faster than purely graph-based distances. For this reason, we adopt the Hybrid distance in the remainder of this work.

\subsection{Interconnection Tree Enumeration}

We evaluate the practical performance of our interconnection tree enumeration algorithm on several instances corresponding to realistic molecular cages. Each instance is modeled as a complete multipartite graph with $k$ parts and $l$ vertices per part, and is denoted by $(k, l)$ in the experiments. The chosen values of $k$ and $l$ reflect the sizes encountered in our cage construction process.

The goal of this experiment is to assess the efficiency of the enumeration algorithm in terms of total runtime, amortized delay between successive solutions, and the overhead induced by storing interconnection trees and sorting them by weight. In our implementation, we adopt simpler data structures, which do not ensure constant amortized delay, but lead to improved practical delays on the small instances considered here.

\begin{table}[b]
\centering
\setlength{\tabcolsep}{3pt}
\tiny
\caption{Performance of the interconnection tree enumeration algorithm for several realistic instances. Instances exceeding a limit of two minutes are marked as \emph{Timeout (TO)}. }
\label{tab:execution-results-instances-transposed}
\begin{tabular}{lccccc}
\scriptsize\textbf{Instance} &
\scriptsize\textbf{\#Trees} &
\scriptsize\textbf{Time (ms)} &
\scriptsize\textbf{Delay (ns)} &
\scriptsize\textbf{Sorting overhead (\%)} &
\scriptsize\textbf{Storage overhead (\%)}\\
\hline
\T
\scriptsize (3,3)  & \scriptsize 162            & \scriptsize 0.006        & \scriptsize 36 & \scriptsize 116 & \scriptsize 66 \\
\scriptsize (3,6) & \scriptsize 3\,240         & \scriptsize 0.07        & \scriptsize 22 & \scriptsize 294 & \scriptsize 190 \\ 
\scriptsize (5,3) & \scriptsize 174\,960       &\scriptsize 5.8        & \scriptsize 34 & \scriptsize 255 & \scriptsize 164 \\ 
\scriptsize (5,6) & \scriptsize 107\,308\,800  & \scriptsize 2\,685   & \scriptsize 25 & \scriptsize 402 & \scriptsize 188 \\
\scriptsize (7,3) & \scriptsize 525\,404\,880  & \scriptsize 16\,742  & \scriptsize 32 & \scriptsize \emph{N/A} & \scriptsize \emph{N/A} \\
\scriptsize (7,6) & \scriptsize 4\,119\,428\,086 & \scriptsize \emph{TO} & \scriptsize 29 & \scriptsize \emph{N/A} & \scriptsize \emph{N/A} \\ 
\hline
\end{tabular}
\end{table}

The results reported in Table~\ref{tab:execution-results-instances-transposed} confirm that the amortized delay is very small and effectively constant in practice. As expected, storing and sorting interconnection trees is more expensive than generating them, and the relative cost of sorting increases with the number of trees. Nevertheless, the time spent per interconnection tree remains negligible compared to the time required to compute the molecular paths associated with a given tree.
Therefore, when the number of interconnection trees is reasonable, typically below $10^8$, we can afford to enumerate all trees and sort them by weight.

\subsection{Full Molecular Cage Generation}

In this section, we evaluate the full pipeline on real substrates, where multiple molecular paths must be generated to connect the selected binding patterns. We evaluate the early edge removal heuristic and two generation strategies and, independently of these choices, we analyze the resulting generated cages. Figure~\ref{fig:results_cages} illustrates cages produced by our approach, visualized using PyMOL~\cite{PyMOL}. The molecular cages behave as expected: binding sites are positioned to interact with the substrate, and the connecting paths define a shape complementary to that of the substrate. No path obstructs the internal cavity intended to host the substrate. 

\begin{figure}[t]
    \centering
    \begin{subfigure}[b]{0.32\textwidth}
        \centering
        \includegraphics[width=0.65\textwidth]{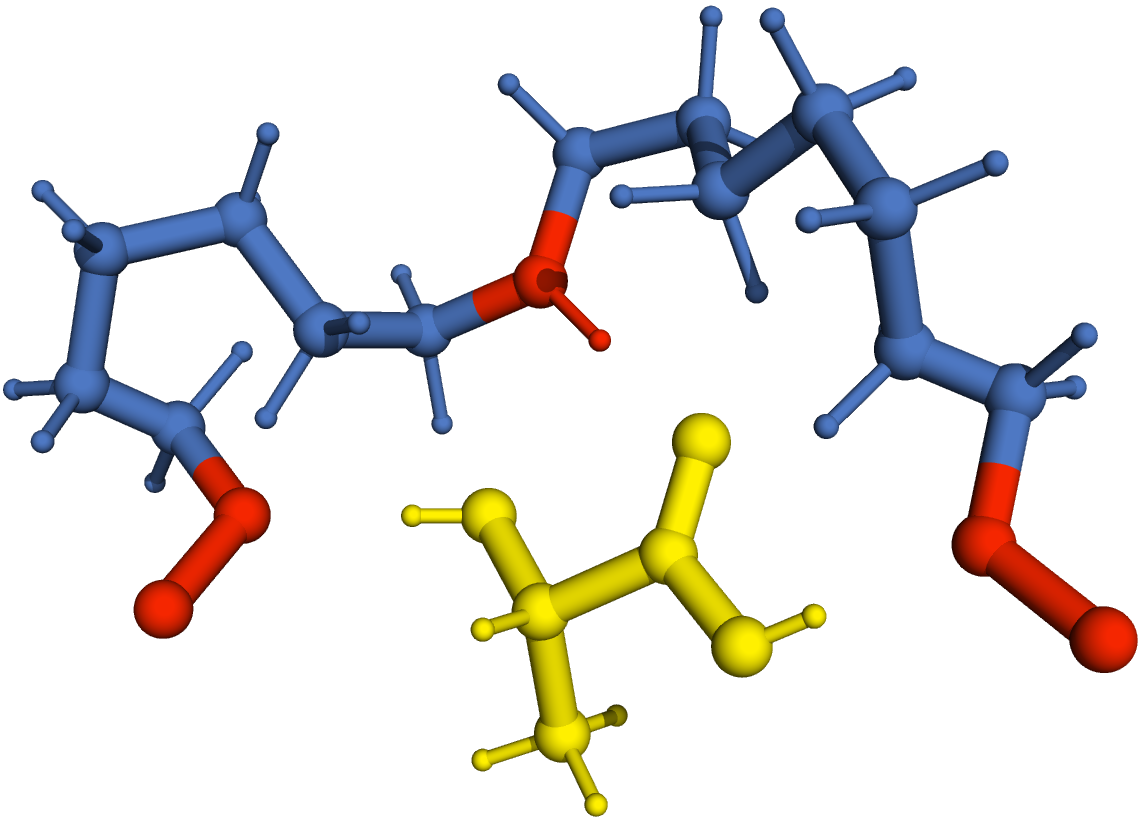}
        \caption{YILLAG: L-(+)-lactic acid}
        \label{fig:results_yillag}
    \end{subfigure}
    \begin{subfigure}[b]{0.32\textwidth}
        \centering
        \includegraphics[width=0.65\textwidth]{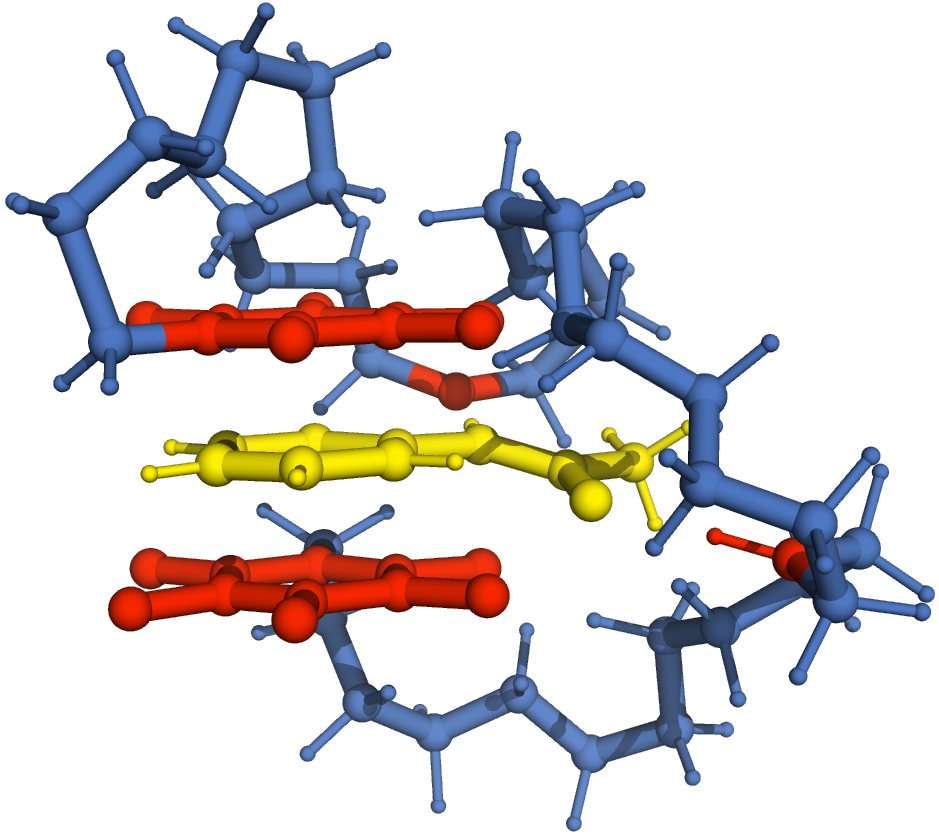}
        \caption{ACANIL01: Acetanilide}
        \label{fig:results_acanil01}
    \end{subfigure}
    \begin{subfigure}[b]{0.32\textwidth}
        \centering
        \includegraphics[width=0.65\textwidth]{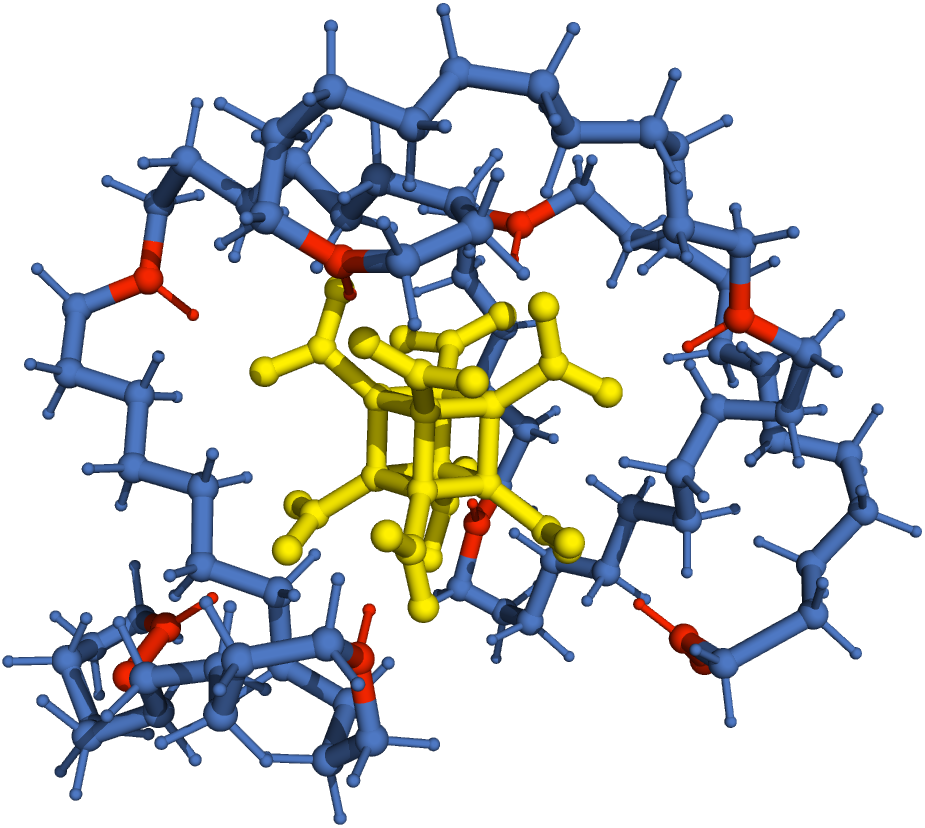}
        \caption{CUGDIR: Octanitrocubane}
        \label{fig:results_cugdir}
    \end{subfigure}\\
    \begin{subfigure}[b]{0.32\textwidth}
        \centering
        \includegraphics[width=0.8\textwidth]{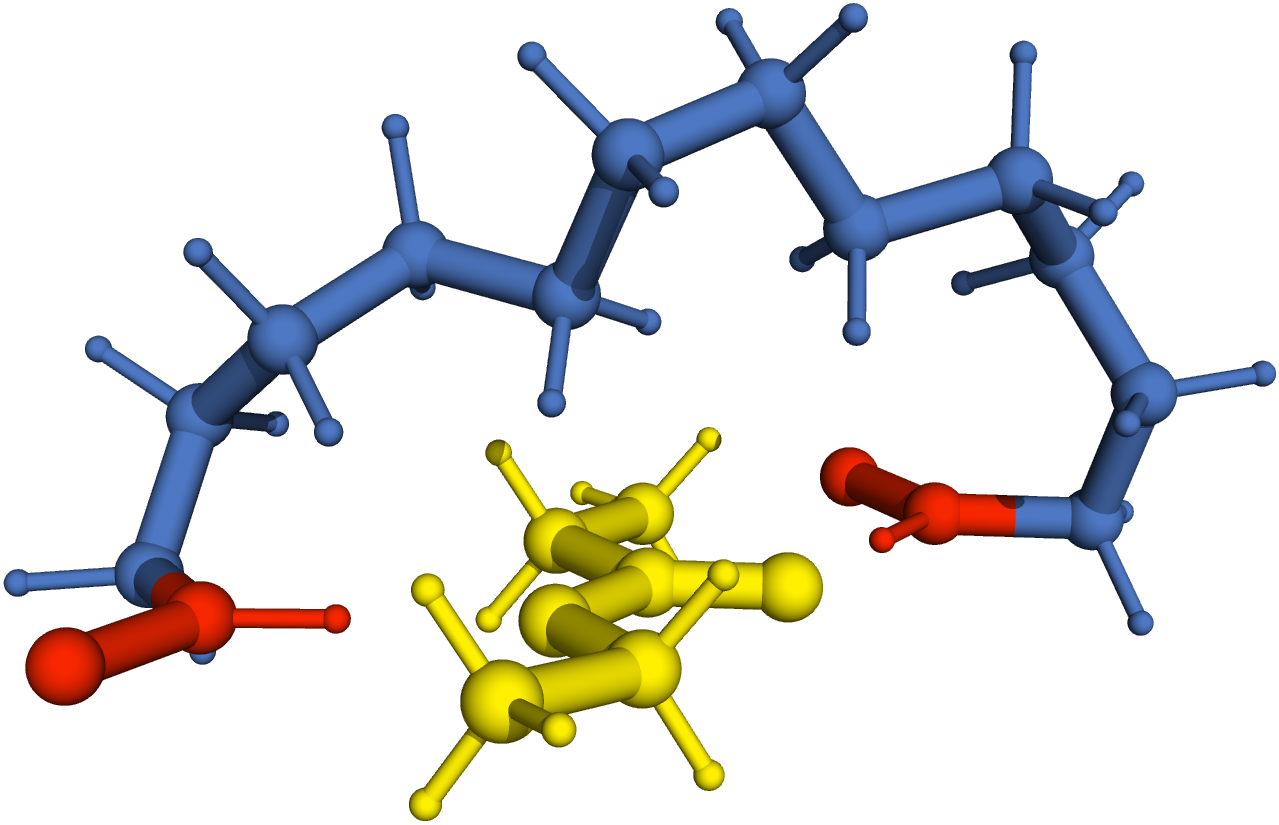}
        \caption{YARZUN03: Ethyl propionate}
        \label{fig:results_yarzun}
    \end{subfigure}
    \begin{subfigure}[b]{0.32\textwidth}
        \centering
        \includegraphics[width=0.8\textwidth]{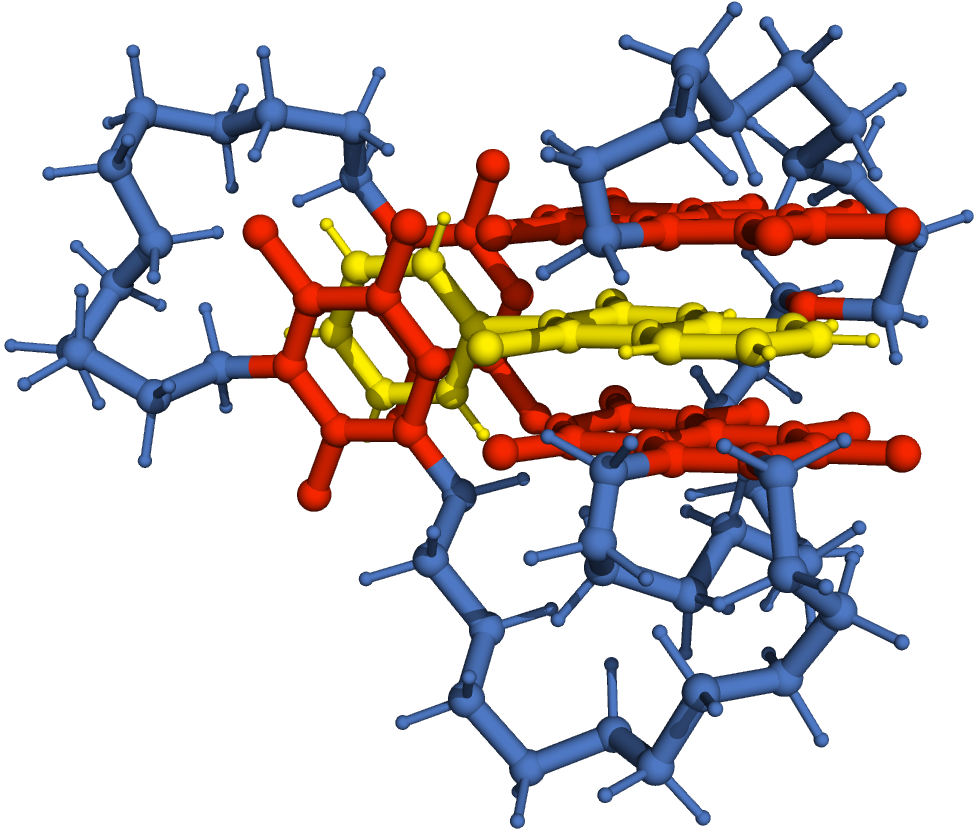}
        \caption{ABABEL: $4-\text{Chloro}-3-\text{phenylquinolin}-2(1H)-\text{one}$}
        \label{fig:results_ababel}
    \end{subfigure}
    \begin{subfigure}[b]{0.32\textwidth}
        \centering
        \includegraphics[width=0.8\textwidth]{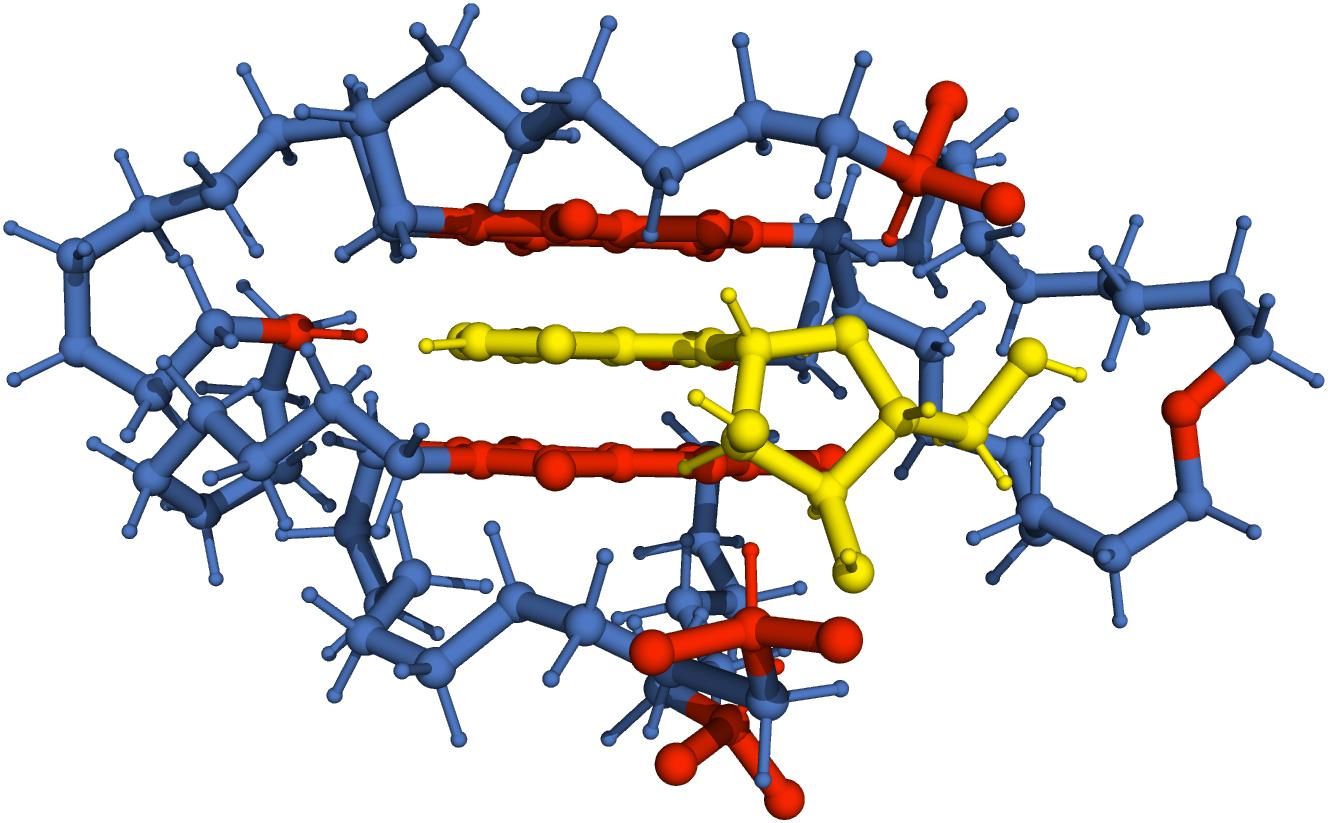}
        \caption{ADENOS10: Adenosine }
        \label{fig:results_adenos}
    \end{subfigure}
    \caption{Generated cages (blue) for three substrates (yellow) with their binding patterns (red).}
    \label{fig:results_cages}
\end{figure}

\subparagraph*{Early edge removal heuristic}
We introduce the \textbf{Early Edge Removal} heuristic to accelerate the exploration of interconnection trees. When the algorithm fails to construct a molecular path for a given edge, all remaining interconnection trees containing this edge are skipped. This pruning strategy eliminates large portions of the search space that are unlikely to yield valid or compact cages.

\begin{table}[b]
\centering
\setlength{\tabcolsep}{2pt}
\tiny
\caption{Impact of heuristics (left) and generation strategies (right) on solution and runtime. MNoA is for Minimum Number of Atoms, OTF for On-the-fly, ORD for ordered and \emph{TO} for Timeout.}
\label{tab:heuristic-and-mode}
\begin{subtable}[t]{0.5\textwidth}
\centering
\begin{tabular}{llllllll}
\scriptsize\textbf{Instance} &
\rot{Mode} &
\rot{\shortstack{Edge\\ Removal}} &
\rot{\#Cages} &
\rot{\#Trees} &
\rot{\shortstack{Average\\ NRMSD}} &
\rot{MNoA} &
\rot{Total (s)} \\
\hline
\T
\scriptsize \multirow{4}{*}{ ABCLUA10}
 & \scriptsize OTF & \scriptsize no  & \scriptsize 10 & \scriptsize 1  & \scriptsize 0.87 & \scriptsize 203 & \scriptsize 0.04 \\
 & \scriptsize OTF & \scriptsize yes & \scriptsize 10 & \scriptsize 1  & \scriptsize 0.87 & \scriptsize 203 & \scriptsize 0.04 \\
 & \scriptsize ORD & \scriptsize no  & \scriptsize 10 & \scriptsize 1  & \scriptsize 0.95 & \scriptsize 101 & \scriptsize 11.6 \\
 & \scriptsize ORD & \scriptsize yes & \scriptsize 10 & \scriptsize 1  & \scriptsize 0.95 & \scriptsize 101 & \scriptsize 11.5\\
\hline
\T
\scriptsize \multirow{4}{*}{ ABINOS}
 & \scriptsize OTF & \scriptsize no  & \scriptsize 10 & \scriptsize 166 & \scriptsize 0.63 & \scriptsize 151 & \scriptsize 15 \\
 & \scriptsize OTF & \scriptsize yes & \scriptsize 10 & \scriptsize 14  & \scriptsize 0.59 & \scriptsize 133 & \scriptsize 18 \\
 & \scriptsize ORD & \scriptsize no  & \scriptsize 10 & \scriptsize 4  & \scriptsize 0.99 & \scriptsize 91 & \scriptsize 5.9 \\
 & \scriptsize ORD & \scriptsize yes & \scriptsize 10 & \scriptsize 3  & \scriptsize 0.99 & \scriptsize 91 & \scriptsize 5.8 \\
\hline
\T
\scriptsize \multirow{4}{*}{ ACANIL01}
 & \scriptsize OTF & \scriptsize no  & \scriptsize 0  & \scriptsize 33 & \scriptsize \emph{N/A} & \scriptsize \emph{N/A} & \scriptsize \emph{TO} \\
 & \scriptsize OTF & \scriptsize yes & \scriptsize 10 & \scriptsize 11 & \scriptsize 0.87 & \scriptsize 99 & \scriptsize 5.4 \\
 & \scriptsize ORD & \scriptsize no  & \scriptsize 10 & \scriptsize 500  & \scriptsize 0.70 & \scriptsize 87 & \scriptsize 14 \\
 & \scriptsize ORD & \scriptsize yes & \scriptsize 10 & \scriptsize 14  & \scriptsize 0.70 & \scriptsize 87 & \scriptsize 3 \\
\hline
\end{tabular}
\end{subtable}
\hfill
\begin{subtable}[t]{0.48\textwidth}
\centering
\begin{tabular}{lllllll}
\scriptsize\textbf{Instance} &
\rot{Mode} &
\rot{\#Trees} &
\rot{\#Cages} &
\rot{\shortstack{Average\\ NRMSD}} &
\rot{MNoA} &
\rot{Total (s)} \\
\hline
\T
\scriptsize \multirow{2}{*}{ ABCLUA10}
 & \scriptsize OTF & \scriptsize 1      & \scriptsize 17\,424 & \scriptsize 0.67 & \scriptsize 191 & \scriptsize \emph{TO} \\
 & \scriptsize ORD    & \scriptsize 34 & \scriptsize 15\,964 & \scriptsize 0.74 & \scriptsize 95  & \scriptsize \emph{TO} \\
\hline
\T
\scriptsize \multirow{2}{*}{ ABINOS}
 & \scriptsize OTF & \scriptsize 66     & \scriptsize 8\,219 & \scriptsize 0.43 & \scriptsize 85 & \scriptsize \emph{TO} \\
 & \scriptsize ORD    & \scriptsize 174 & \scriptsize 6\,834 & \scriptsize 0.61 & \scriptsize 61 & \scriptsize \emph{TO} \\
\hline
\T
\scriptsize \multirow{2}{*}{ ACANIL01}
 & \scriptsize OTF & \scriptsize 50     & \scriptsize 564  & \scriptsize 0.57 & \scriptsize 90 & \scriptsize \emph{TO} \\
 & \scriptsize ORD & \scriptsize 264 & \scriptsize 2\,253 & \scriptsize 0.52 & \scriptsize 69 & \scriptsize \emph{TO} \\
\hline
\T
\scriptsize \multirow{2}{*}{ ALFUCO}
 & \scriptsize OTF & \scriptsize 26     & \scriptsize 280  & \scriptsize 0.64 & \scriptsize 100 & \scriptsize \emph{TO} \\
 & \scriptsize ORD   & \scriptsize 63 & \scriptsize 2\,528 & \scriptsize 0.43 & \scriptsize 61  & \scriptsize \emph{TO} \\
\hline
\T
\scriptsize \multirow{2}{*}{ BAHSUY}
 & \scriptsize OTF & \scriptsize 6 & \scriptsize 156 & \scriptsize 0.17 & \scriptsize 19 & \scriptsize 0.06 \\
 & \scriptsize ORD    & \scriptsize 6 & \scriptsize 156 & \scriptsize 0.17 & \scriptsize 19 & \scriptsize 0.05 \\
\hline
\end{tabular}
\end{subtable}

\end{table}

Table~\ref{tab:heuristic-and-mode} compares generation modes with and without this heuristic on real instances, using a time limit of 120 seconds and a maximum of 10 returned solutions. The heuristic proves highly effective: it prunes interconnection trees that would either fail entirely or lead to long paths, enabling the discovery of significantly smaller cages in substantially less time.

\subparagraph*{Generation strategies}
Finally, we compare two strategies for interconnection tree generation: \textbf{on-the-fly} and \textbf{ordered}. As shown in Table~\ref{tab:heuristic-and-mode}, when the time budget does not allow to try to construct paths for all trees enumerated, processing interconnection trees in increasing weight order yields smaller cages than the on-the-fly approach. When sufficient time is available to enumerate all trees and try to construct paths on each, both strategies find the same solutions, but the cost of sorting is negligible compared to the cost of molecular path construction.

%% file: section/conclusion.tex
In this paper, we introduced a computational framework to generate molecular cages tailored to a target substrate. By combining the placement of binding patterns, the construction of short molecular paths, and the enumeration of interconnection trees, our approach enables the automatic generation of chemically realistic cages while promoting geometric complementarity and substrate specificity. Extensive experimental results demonstrate that the proposed heuristics, in particular hybrid distance estimations and pruning strategies on interconnection trees, allow the construction of compact cages within reasonable computation times.

Several directions for future work emerge from this study. First, distance computation could be further improved by adapting algorithms and the spatial data structures to our specific problem, in order to obtain distance estimates that are both more accurate and faster to compute. Second,  methods to rigidify the cage could be explored, for instance by completing the structure  to reduce the length of molecular paths, leading to more stable and synthetically accessible architectures. Finally, allowing multiple binding patterns to be connected to a single attachment point could help further reduce molecular path lengths and increase the compactness of the generated cages.

%% file: section/appendix.tex
\section{Collision-Free Angular Intervals on a Circle problem}\label{sec:SchemaCollision-free}
To construct the circle, we define a center point $C_c$ from the positions
$C_{p_i}$ and $C_{p_{i-1}}$. The next vertex ($C_{p_{i+1}}$) is constrained to lie on a circle of radius
$R$ in a plane spanned by two orthonormal vectors $\vec{vec}_1$ and $\vec{vec}_2$.
Thus, any candidate position can be written as
\[
\mathcal{C}(\theta)
= \vec{C_c}
+ R \cos(\theta)\,\vec{vec}_1
+ R \sin(\theta)\,\vec{vec}_2 ,
\qquad \theta \in [0,2\pi].
\]
For each point $s_i$ in the set $S$ of positions of atoms, the collision constraint
$\|\mathcal{C}(\theta) - s_i\| < d$ defines one or more forbidden angular intervals. The valid placement domain for the next vertex is obtained by subtracting each forbidden interval to the full domain.

As previously discussed, once the next atom is positioned, the two missing hydrogen atoms can be added to complete the local bonding pattern. However, the position of the next vertex is not known in advance, but we know that an hydrogen atom satisfies the following constraint: the angle formed by the next vertex, the starting vertex $C_{p_i}$, and the hydrogen atom is  $109.5^\circ$.

Following the same geometric construction used for the next atom, we determine a circle on which each hydrogen atom must lie. This circle has center $C_{ch}$, radius $r_h$, and lies in a plane spanned by two orthonormal vectors $\vec{vec}_1$ and $\vec{vec}_2$. Any feasible position of a hydrogen atom can therefore be parameterized by an
angle $\phi \in [0,2\pi]$ as
\[
\mathcal{C}_h(\phi)
= \vec{C_{ch}}
+ r_h \cos(\phi)\,\vec{vec}_1
+ r_h \sin(\phi)\,\vec{vec}_2 ,
\qquad \phi \in [0,2\pi].
\]
Thus, we can also compute the valid angular intervals for the hydrogen atoms.
Since $R \neq r_h$, $\phi = \theta \pm 109.5^\circ$ cannot be directly used. Instead, the angular shift must be determined using the formula for the angle between two vectors:
\(cos(109.5^\circ)=\frac{\overrightarrow{a}.\overrightarrow{b}}{\lVert\overrightarrow{a}\rVert \times \lVert\overrightarrow{b}\rVert}\).

From this relation, we compute a value $\Delta$ giving the geometric correspondence between
the angular parameters of the two circles and we write $\phi = \theta \pm \Delta$. 
Thus, we obtain the valid angular intervals for all three
positions (the next atom and the two hydrogens), as shown in  Figure~\ref{fig:geometry_grid_positive_hydro}, and we intersect them to find the final set of $\theta$ which are valid placement region.

The number of intervals created by possible collisions can easily be bounded as follows.
All atoms in collision with an atom whose center lies on the circle are included in a region at distance at most $3col/2$ of this circle. 
Hence, the region containing the potential interfering atoms is of volume $2\pi R \pi (3col/2)^2 $
and each atom is of volume $(4/3)\pi col^3 $. Hence, the number of such atoms is bounded by the quotient of these values, $(27/8)\pi R/col \leq 14 $. Then, taking the intersection of the three set of intervals, multiplies this number by at most three. This is a crude bound and in practice, there are less than $10$ intervals.

\newcommand{\drawSphere}[4]{
    \shade[ball color=carbon] (#1, #2, #3) circle (0.15cm); 
    \node[anchor=south west] at (#1, #2, #3) {#4};          
}

\newcommand{\drawSphereHydro}[4]{
    \shade[ball color=hydrogen] (#1, #2, #3) circle (0.15cm); 
    \node[anchor=south west] at (#1, #2, #3) {#4};          
}

\tdplotsetmaincoords{75}{115}  
\begin{figure}[t]
    \centering
    \resizebox{0.5\linewidth}{!}{
    \begin{tikzpicture}[tdplot_main_coords,scale=2]    
    
            \coordinate (A) at (0, 0, 0);
            \coordinate (B) at (6, 0, 0);
            \coordinate (C) at (6, 6, 0);
            \coordinate (D) at (0, 6, 0);
            \coordinate (E) at (0, 0, 6);
            \coordinate (F) at (6, 0, 6);
            \coordinate (G) at (6, 6, 6);
            \coordinate (H) at (0, 6, 6);
    
            \foreach \i in {0,1,2} {
                \draw[gridline,dashed] (A) ++(0,0,2*\i) -- +(6,0,0);
                \draw[gridline,dashed] (A) ++(2*\i,0,0) -- +(0,0,6);
                \draw[gridline,dashed] (A) ++(2*\i,0,0) -- +(0,6,0);
                \draw[gridline,dashed] (A) ++(0,2*\i,0) -- +(6,0,0);
            }
            
            \draw[cubeedge,dashed] (A) -- (B) ;
            \draw[cubeedge] (B) -- (C);
            \draw[cubeedge] (C) -- (D);
            \draw[cubeedge,dashed] (A) -- (D);
            \draw[cubeedge] (E) -- (F) -- (G) -- (H) -- cycle;
            \draw[cubeedge,dashed] (A) -- (E);
            \draw[cubeedge] (B) -- (F);
            \draw[cubeedge] (D) -- (H);
    
            \draw[->,dashed] (A) -- (B) node[anchor=north east] {$x$};
            \foreach \i in {0,1,2,3} {
                \draw (A) ++(2*\i,0,0) -- ++(0,0,0.1) node[anchor=south east] {\footnotesize \i};
            }

            \draw[->,dashed] (A) -- (D) node[anchor=south west] {$y$};
            \foreach \i in {1,2,3} {
                \draw (A) ++(0,2*\i,0) -- ++(0.2,0,0) node[anchor= north west] {\footnotesize \i};
            }
    
            \draw[->,dashed] (A) -- (E) node[anchor=south] {$z$};
            \foreach \i in {1,2,3} {
                \draw (A) ++(0,0,2*\i) -- ++(0.2,0.1,0) node[anchor=north west] {\footnotesize \i};
            }

            \coordinate (Cs) at (3,3,2);       
            \coordinate (Cn) at (3,0,2);    
            \coordinate (Next) at (3, 3+0.998, 2+2.828448); 
            \coordinate (Centre) at (3, 3+0.998, 2); 
            \coordinate (Ch) at (3 ,3+0.73328, 2); 
            \coordinate (H1) at (4.796561 ,3+0.73328, 0.96373677); 
            \coordinate (H2) at (1.203438 ,3+0.73328, 0.96373677); 
    
            \draw[thick] (Cn) -- (Cs) node[midway, above left] {$0.15\,\mathrm{nm}$}; 
            \draw[thick] (Cs) -- (Next) node[midway, left] {$0.15\,\mathrm{nm}$}; 
            \draw[thick] (Cs) -- (H1) node[midway, left] {$0.11\,\mathrm{nm}$}; 
            \draw[thick] (Cs) -- (H2); 
            \draw[thick,dashed] (Cs) -- (Centre) node[midway, below] {};
            \draw[thick,dashed] (Next) -- (Centre) node[midway, right] {$R=0.1414\,\mathrm{nm}$}; 
            \draw[thick,dashed] (Ch) -- (H1) node[midway, right] {$r_h=0.1037\,\mathrm{nm}$}; 
            \draw[thick,dashed] (Ch) -- (H2); 
    
            \drawSphere{3}{3}{2}{$C_{p_i}$}; 
            \drawSphere{3}{0}{2}{$C_{p_{i-1}}$}; 
            \shade[ball color=carbon] (3, 3+0.998, 2+2.828448) circle (0.15cm); \node[anchor=south east] at (3, 3+0.998, 2+2.828448) {$C_{p_{i+1}}$};
            \filldraw[carbon] (3,3+0.998,2) circle[radius=1pt] ; \node[anchor=west] at (3, 3+0.998, 2)  {$C_{c}$}; 
            \filldraw[hydrogen] (3 ,3+0.73328, 2) circle[radius=1pt] ; \node[anchor=south] at (3, 3+0.73328, 2)  {$C_{ch}$}; 
            \shade[ball color=hydrogen] (4.796561 ,3+0.73328, 0.96373677) circle (0.15cm); \node[anchor=south west] at (4.796561 ,3+0.73328, 0.96373677)  {$H_1$};
            \shade[ball color=hydrogen] (1.203438 ,3+0.73328, 0.96373677) circle (0.15cm); \node[anchor=south west] at (1.203438 ,3+0.73328, 0.96373677)  {$H_2$};
    
            \tdplotsetthetaplanecoords{90}
            \tdplotdrawarc[tdplot_rotated_coords,-,blue]{(Cs)}{0.4}{379.47}{270}{anchor=south east}{$109.47^\circ$} 
            \tdplotdrawarc[tdplot_rotated_coords,-,blue]{(Next)}{0.5}{180}{199.47}{anchor=north}{$19.47^\circ$} 

            \tdplotsetthetaplanecoords{180}
            \tdplotdrawarc[tdplot_rotated_coords,->,red]{(Centre)}{2.828448}{0}{60}{anchor=west}{$\theta$} 
            \tdplotdrawarc[tdplot_rotated_coords,-,blue]{(Ch)}{2.074}{119.47}{238.94}{anchor=west}{$109.47^\circ$} 

            \draw (C) -- (G);
    \end{tikzpicture}
    }
    \caption{Schema of the tetrahedral carbon $C_{p_{i}}$ and the possible positions for $C_{p_{i+1}}$, \(H_1\) and \(H_2\). Carbon atoms in greens and hydrogen atoms in Gray.}
    \label{fig:geometry_grid_positive_hydro}
\end{figure}
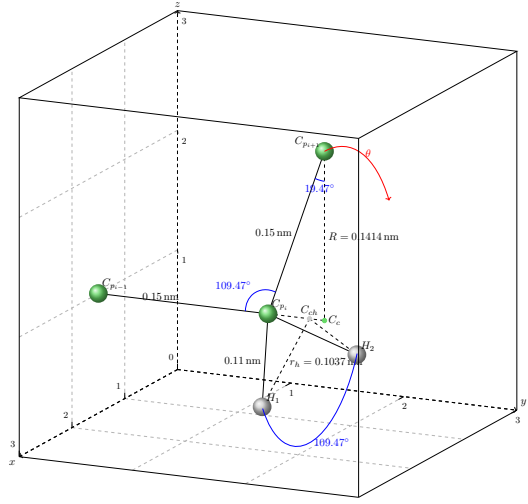

\section{Interconnection tree} \label{appendix:interconnection_tree}

\subparagraph*{Proof of NP-Completeness}

The problem of deciding whether there is an interconnection tree can be generalized to arbitrary  multipartite graphs. 

\begin{problem}[Interconnection Tree Problem (IT)]
Given a multipartite graph $G =( V = V_1\sqcup\dots\cup V_k ,E)$, decide whether there exists an interconnection tree of $G$.
\end{problem}

In contrast to complete multipartite graphs, the problem is $\NP$-complete
for general multipartite graphs.

\begin{theorem} \label{th:ITisNP}
The interconnection tree problem is $\NP$-complete.
\end{theorem}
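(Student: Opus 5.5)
Membership in \NP{} is immediate. A certificate is a set $T \subseteq E$. We check in polynomial time that no two edges of $T$ share an endpoint, that $|T| = k-1$, and that $\partition(T)$ connects all parts. The last two checks together show that $\partition(T)$ is a spanning tree of $G_{\partition}$: connectivity can be tested by BFS or union--find on the $k$ parts, and with exactly $k-1$ edges a connected multigraph on $k$ vertices has no repeated edges and no cycles.

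For hardness, as announced in Section~\ref{sec:interconnection}, we reduce from Hamiltonian path. Let $H=(X,F)$ with $X=\{x_1,\dots,x_n\}$ be an instance, and build a multipartite graph $G$ with one part per vertex of $H$. The part $V_i$ consists of two copies $x_i^1, x_i^2$ of $x_i$. For every edge $x_ix_j \in F$ we add all four edges $x_i^a x_j^b$ with $a,b \in \{1,2\}$, and we add no other edges. Then $G$ is multipartite with respect to $V_1,\dots,V_n$, and its quotient graph $G_{\partition}$ is exactly $H$.

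The key observation is that if $T$ is a matching, every part contains at most $2$ endpoints of $T$. Hence every vertex of the tree $\partition(T)$ has degree at most $2$. A spanning tree of maximum degree $2$ is a Hamiltonian path, so any interconnection tree yields a Hamiltonian path of $H$.

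Conversely, take a Hamiltonian path $x_{\sigma(1)},\dots,x_{\sigma(n)}$ of $H$. We realize the edge between $x_{\sigma(i)}$ and $x_{\sigma(i+1)}$ by the edge $x_{\sigma(i)}^2x_{\sigma(i+1)}^1$. Each part then uses copy $1$ at most once, for the incoming path edge, and copy $2$ at most once, for the outgoing one, so these edges form a matching whose image is the path, i.e.\ a spanning tree of $G_{\partition}=H$.

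The construction is clearly polynomial, which gives \NP-hardness. The only delicate point is the degree-bound argument: it relies on each part having exactly two vertices, so that a matching caps the quotient degree at $2$. That same choice of two copies is also what leaves enough endpoints to realize every path edge in the converse direction.
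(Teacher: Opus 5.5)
Your proof is correct and follows the same route as the paper. It uses the same reduction from Hamiltonian path with a two-vertex part per vertex of $H$, the same observation that a matching caps the degree of $\partition(T)$ at $2$ (so a spanning tree must be a Hamiltonian path), and the same out/in realization of the path edges. The only difference is cosmetic: you join all four copy pairs for each edge of $H$, so that $G_{\partition}=H$ exactly, whereas the paper only adds the edges $(u^{\text{out}},v^{\text{in}})$; both choices work.
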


\begin{proof}
     Let us prove that IT $\in$ NP.
        A witness is any set $T\subseteq E$ of edges of $G$. We can check that $T$ is a matching and that $\partition(T)$ is a spanning tree of $G_{\partition}$ in linear time in $\left| V \right|$. Hence, IT $\in$ NP.

     We now prove that IT is NP-hard by a reduction from the Hamiltonian path problem~\cite{Karp1972}.
        We consider $H=(V_H,E_H)$ an instance of the Hamiltonian path problem.
        
        We build a multipartite graph $G$ such that every vertex of $H$ is represented by a part containing two vertices. More formally, for each $u \in V_H$, we define $V_u  = \{ u^{\text{in}},u^{\text{out}}\}$ and $G =(V,E)$ with
        \begin{itemize}
            \item $V = \left( \cup_{u \in V_H} V_u \right)$
            \item $E = \{ (u^{\text{out}},v^{\text{in}}) \mid (u,v) \in E_H \}$ 
        \end{itemize}
        The graph $G$ is a multipartite graph, where the parts are the $V_{u}$'s. 
        This graph can be built  in time linear in $H$. 

        \textbf{($\Rightarrow$) }
        We prove that if $H$ has a Hamiltonian path then $G$ has an interconnection tree.
        Let $P = u_{1},u_{2},\dots, u_{n}$ be a Hamiltonian path of $H$. 
        We let $T = \{(u_{i}^{\text{out}},u_{i+1}^{\text{in}})\mid 1 \leq t \leq n-1\}$.
         Since $P$ is a Hamiltonian path, no vertex is repeated in the path. 
            Hence, by construction of $T$, its edges have all distinct vertices, which means it is a 
            matching of $G$.
            
            By construction, each edge of $\partition(T)$ is of the form $(V_{u_{i}},V_{u_{i+1}})$.
            Hence, $\partition(T)$ is a path of the quotient graph. Because $P$ is Hamiltonian it contains all vertices of $H$, therefore $\partition(T)$ connect all vertices of the quotient graph $G_{\partition}$. Therefore, $\partition(T)$ is a Hamiltonian path of $G_{\partition}$ and thus is also a spanning tree.

        \textbf{($\Leftarrow$)}
        We prove that if $G$ has an interconnection tree $T$ then $H$ has a Hamiltonian path.
        Let $T$ be a matching of G and $\partition(T)$ is a spanning tree of \(G_{\partition}\).

        Since each part $V_u$ contains exactly two vertices, there are at most two edges of $T$ using a vertex of the same part. Thus, $\partition(T)$ is a connected subgraph of \(G_{\partition}\) and each of its vertices is of degree at most two. Therefore, $\partition(T)$ is a path $V_{u_1},V_{u_2},\dots,V_{u_k}$ connecting all parts of \(G_{\partition}\).
        
         For each consecutive pair $(V_{u_{i}},V_{u_{i+1}})$ in the path $\partition(T)$, the matching $T$ must contain an edge between a vertex of $V_{u_{i}}$ and a vertex of $V_{u_{i+1}}$. By construction of $G$, either all edges between $V_{u_{i}}$ and $V_{u_{i+1}}$ are of the form $(u_{i}^{\text{out}},u_{i+1}^{\text{in}})$ or $(u_{i}^{\text{in}},u_{i+1}^{\text{out}})$. These edges exist in $G$ only if $(u_i,u_{i+1})\in E_H$. Hence, for all $i\leq k$, the pair $(u_i,u_{i+1})$ is an edge in $H$. Therefore, $u_1, u_2,\dots ,u_k$ is a Hamiltonian path in $H$.
\end{proof}

\subparagraph*{Proof of Lemma~\ref{lemma:complete}}
Here, we give a proof of Lemma~\ref{lemma:complete}, which gives a simple characterization of the complete multipartite graphs admitting an interconnection tree. 

\begin{lemma*}[Lemma~\ref{lemma:complete}]
Let \(G = (V = V_1 \sqcup \dots \sqcup V_k, E)\) be a complete multipartite graph.
Then \(G\) admits an interconnection tree if and only if
\(|V| \geq 2(k-1)\).
\end{lemma*}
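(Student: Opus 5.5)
The plan is to prove the two directions separately. Necessity follows by counting; sufficiency is a construction that goes through degree sequences of trees. Throughout, parts are assumed non-empty, as in the modeling where each part is a binding pattern with at least one attachment atom. The case \(k=1\) is trivial: \(T=\emptyset\) works, and the bound reads \(|V|\geq 0\).

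For necessity, let \(T\) be an interconnection tree. Then \(\partition(T)\) is a spanning tree of \(G_{\partition}\), which has \(k\) vertices. Distinct edges of \(T\) join distinct pairs of parts: if two edges joined the same pair, they would map to the same quotient edge, and \(\partition(T)\) would need \(k-1\) distinct edges from fewer than \(k-1\) edges of \(T\) (treating \(\partition(T)\) as a set). So \(|T|=k-1\). Since \(T\) is a matching, these \(k-1\) edges cover \(2(k-1)\) distinct vertices of \(G\). Hence \(|V|\geq 2(k-1)\).

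For sufficiency, assume \(k\geq 2\) and \(|V|\geq 2(k-1)\). I would first choose a target degree \(d_i\) for each part, with \(1\le d_i\le |V_i|\) and \(\sum_i d_i = 2(k-1)\). Start with \(d_i=1\) for all \(i\), which gives sum \(k\) and respects \(d_i\le|V_i|\) because parts are non-empty. There are still \(k-2\) units to distribute. The total slack is \(\sum_i (|V_i|-1) = |V|-k \geq k-2\), so the remaining units can be added greedily without exceeding any \(|V_i|\).

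Next I would use the classical fact that any sequence of positive integers \(d_1,\dots,d_k\) summing to \(2(k-1)\) is the degree sequence of a tree on \(k\) labelled vertices. To invoke it concretely, take a Prüfer sequence in which label \(i\) appears exactly \(d_i-1\) times; this is a sequence of length \(k-2\). Alternatively, argue by induction: some \(d_i=1\) and some \(d_j\ge 2\), so attach leaf \(i\) to \(j\) and recurse on the remaining \(k-1\) labels. Let \(\tau\) be the resulting tree on the parts \(V_1,\dots,V_k\).

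Finally, lift \(\tau\) to \(G\). Each part \(V_i\) is incident to exactly \(d_i\le |V_i|\) edges of \(\tau\), so I can assign to these edges pairwise distinct vertices of \(V_i\). For each edge \((V_i,V_j)\) of \(\tau\), take the edge of \(G\) between the vertex assigned in \(V_i\) and the vertex assigned in \(V_j\). This edge exists because \(G\) is complete multipartite and \(i\neq j\). Every vertex is used at most once, so the resulting set \(T\) is a matching, and \(\partition(T)=\tau\) is a spanning tree of \(G_{\partition}\).

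The only step with real content is the degree-sequence realization. It is standard, but it is the step I would write out most carefully, via Prüfer codes or the leaf-attachment induction. The key inequality \(|V|-k\geq k-2\) is simply a restatement of the hypothesis.
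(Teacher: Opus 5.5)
Your proof is correct, and for sufficiency it takes a genuinely different route from the paper.

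The paper argues by induction on $k$. For $k\ge 3$ it takes a largest part $V_1$, which satisfies $|V_1|\ge 2$ since otherwise $|V|=k<2(k-1)$. It contracts an edge $(u,v)$ with $u\in V_1$, $v\in V_2$, and notes that $G_{/(u,v)}$ is complete multipartite with $k-1$ non-empty parts and $|V|-2\ge 2(k-2)$ vertices. It then lifts the tree obtained by induction via Lemma~\ref{lemma:contraction}.

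You instead split the problem globally into three pieces: a degree budget ($1\le d_i\le |V_i|$, $\sum_i d_i=2(k-1)$, feasible precisely because $|V|-k\ge k-2$), the classical realization of such a sequence as a tree degree sequence, and an injective choice of endpoints inside each part.

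The two routes buy different things. The paper's argument is self-contained given the contraction lemma. It also mirrors the enumeration algorithm, whose margin $|V|-2(\textsc{Parts}(V)-1)$ is exactly the quantity that contraction preserves. Yours relies on the Pr\"ufer (or leaf-attachment) fact, but it avoids the largest-part trick and proves more. Interconnection trees of $G$ correspond bijectively to pairs consisting of:
\begin{itemize}
\item a labelled tree $\tau$ on the $k$ parts with $\deg_\tau(V_i)\le|V_i|$ for all $i$, and
\item for each $i$, an injective assignment of vertices of $V_i$ to the $\tau$-edges at $V_i$.
\end{itemize}
This gives closed-form counts. For example, $3\cdot(3\cdot 2)\cdot 3\cdot 3=162$ and $3\cdot(6\cdot 5)\cdot 6\cdot 6=3240$ for the $(3,3)$ and $(3,6)$ instances of Table~\ref{tab:execution-results-instances-transposed}: a choice of central part, an ordered pair of endpoints in it, and one endpoint in each leaf part.

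One small correction to your necessity paragraph. The sentence meant to justify $|T|=k-1$ does not work. If two edges of $T$ had the same image, this would force $|T|\ge k$, not fewer than $k-1$, and with $\partition(T)$ read as a set there is no contradiction. You do not need that equality, though: $|T|\ge|\partition(T)|=k-1$ together with the matching property already gives $|V|\ge 2|T|\ge 2(k-1)$.
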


\begin{proof}
Assume that $T$ is an interconnection tree of $G$. It has $k-1$ edges, because 
$\partition(T)$ is a spanning tree of $G_{\partition}$, which has for vertices the $k$ parts of $G$.
Since $T$ is a matching, no vertex is used twice in $T$, hence there are exactly $2(k-1)$ vertices
in $T$. Thus, $G$ contains at least $2(k-1)$ vertices.

We now prove by induction on $k$ that a complete multipartite graph with $|V| \geq 2(k-1)$ has an interconnection tree. If there is a single part, we always have $|V| \geq 2(1-1) = 0$ then $T = \emptyset$ is an interconnection tree.  If $G$ has exactly two parts $V_1$ and $V_2$, we have 
$v_1 \in V_1$ and $v_2 \in V_2$ because the parts are non-empty and $\{(v_1,v_2)\}$ is an interconnection tree of $G$.

Let us assume the property for $k-1 \geq 2$ parts and let us consider $G$ with $k$ parts such that
$|V| \geq 2(k-1)$. W.l.o.g., let the largest part of $V$ be $V_1$. Because $k\geq 3$ and $|V| \geq 2(k-1)$, we have $|V_1|\geq 2$. We let $u \in V_1$ and 
$v \in V_2$ (which exists because the parts are non-empty).
The multipartite graph $G_{/(u,v)}$ has $k-1$ non-empty parts, and it satisfies $|V(G_{/(u,v)})| = |V(G)| - 2 \geq 2(k-2)$. By the induction hypothesis used on $G_{/(u,v)}$, we have an interconnection tree $T$ of $G_{/(u,v)}$ and by Lemma~\ref{lemma:contraction}, $T \cup \{(v_1,v_2)\}$ is an interconnection tree of $G$, which proves the induction.
\end{proof}

\subparagraph*{Algorithm}

We give here a description of the algorithm described in Section~\ref{sec:interconnection}. The variable $V$ represents the input, it encodes all vertices with their parts stored in order in an array. Function \textsc{Parts(V)} returns the number of parts of $V$ and $|V|$ the number of vertices. 

We use a function \textsc{Update}$(V,u,v)$, which modifies $V$ such that $u$ and $v$
are merged as well as their parts. We also use the function \textsc{Remove}$(V,u)$,
which removes vertex $u$ from $V$. All these operations can be implemented in constant time.

\begin{algorithm}
\scriptsize
\caption{\textsc{Gen\_ICT} \label{alg:ICT}}

\KwIn{
  $V$ \tcp*[f]{Set of vertices with the partition} \\
  $ICT$ \tcp*[f]{Set of solutions, initially empty} \\
  $T$ \tcp*[f]{Current partial interconnection tree, initially empty} \\
  $m$ \tcp*[f]{Margin: $|V| - 2(\textsc{Parts(V)}-1)$}
}

\If{\textsc{Parts(V)}$= 1$}{
  $ICT \gets ICT \cup \{T\}$\;
  \Return
}

\ForEach{$u \in V_1$}{
  \ForEach{$v \in V_i$ \textbf{with} $i > 1$}{
    \textsc{Gen\_ICT}(
      \textsc{Update}$(V,u,v),\,
      ICT,\,
      T \cup \{(u,v)\},\,
      m$
    )\;
  }
  \textsc{Remove}$(V,u)$\;
  $m \gets m - 1$\;
  \If{$m < 0$}{
    \Return
  }
}

\end{algorithm}

Complexity of Algorithm~\ref{alg:ICT} is given in the following theorem,
along with the proof omitted from the main part of the text. 

\begin{theorem*}
Let \(G\) be a complete multipartite graph with \(k\) parts. Then \(\ICT(G)\) can be enumerated with worst-case delay \(O(k)\) and amortized delay \(O(1)\).
\end{theorem*}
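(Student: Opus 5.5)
The plan is to analyse Algorithm~\ref{alg:ICT} as a traversal of its recursion tree. First, correctness: each call on a complete multipartite graph $G$ is exactly Equation~\eqref{eq:large_union}. The outer loop ranges over $u\in V_1$ in order. The inner loop ranges over the edges $(u,v)$. \textsc{Remove}$(V,u)$ implements passing to $G_{\geq u'}$ for the next $u'$. Every recursive call is made on $(G_{/(u,v)})_{\geq u}$, which is again complete multipartite, since contracting two parts of a complete multipartite graph and deleting vertices keeps it complete multipartite. By Lemma~\ref{lemma:contraction} and the disjointness in Equation~\eqref{eq:large_union}, each interconnection tree is output exactly once.

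The key invariant is that every call of the recursion tree leads to at least one solution, which is the flashlight property. I will track the margin $m=|V|-2(\textsc{Parts}(V)-1)$. By Lemma~\ref{lemma:complete}, $\ICT$ of a complete multipartite instance is nonempty iff $m\geq 0$. Contracting an edge removes two vertices and one part, so it leaves $m$ unchanged. Hence a recursive call is made only while $m\geq 0$, and every call is productive. Removing a vertex $u$ of $V_1$ decreases $m$ by one. The exception is when $V_1$ becomes empty, which also removes a part; at that point all trees using the current $V_1$ are already generated and the call ends. The test $m<0$ stops exactly when the remaining subproblem becomes empty. Between the first and the last vertex of $V_1$, every iteration of the outer loop makes at least one recursive call, because parts $i>1$ are nonempty. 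So no iteration is wasted.

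For the delay, I would note that the recursion tree has depth exactly $k-1$, since each call merges two parts and leaves occur when one part remains. Every internal node has at least one child, and every node costs $O(1)$ work per child plus $O(1)$ overhead, given constant-time \textsc{Update}/\textsc{Remove} and their undo on backtracking. Between two consecutive outputs, the traversal climbs up at most $k-1$ levels and descends at most $k-1$ levels to the next leaf. This gives worst-case delay $O(k)$. This is also where I need to be careful that the $m<0$ cut prevents an iteration with no productive descendant.

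The amortized $O(1)$ bound is the main obstacle. I would aim to show that the number of internal nodes is at most a constant times the number of leaves. The natural approach is to show that most internal nodes have at least two children. A node with parts $V_1,\dots,V_p$ has $|V_1|\cdot(|V|-|V_1|)$ candidate first edges before the cut. It has a single child only in degenerate configurations, for instance $|V_1|=1$ and one other vertex, or margin $0$ with a forced edge. I would argue that along any root-to-leaf path, unary nodes cannot occur consecutively more than a constant number of times without the instance collapsing to few parts. A simpler alternative is a charging argument: when the number of leaves below a node is at least geometric in its height, the total number of nodes is $O(\#\text{leaves})$. Both approaches still require bounding the unary chains; the paper may handle this by choosing $V_1$ as the largest part or by compressing unary chains in the data structure.
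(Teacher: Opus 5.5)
Your correctness argument and your worst-case $O(k)$ delay argument follow the paper's route: Lemma~\ref{lemma:complete} serves as a constant-time emptiness test maintained through the margin, the depth is $k-1$, and each step costs constant incremental work. The genuine gap is the amortized $O(1)$ bound, which you leave open; you only hope that unary chains are short.

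The paper closes it with a quantitative branching property. It takes $V_1$ to be the largest part. Then $|V_1|=1$ forces all parts to be singletons, and $m\geq 0$ forces at most two parts, so the tree is produced in one constant-time step. For $|V_1|\geq 2$ it uses $|\ICT(G)|\geq 2|\ICT((G_{/(u,v)})_{\geq u})|$ for every child. Charging each node's constant cost $C$ equally to the leaves below it, a leaf receives at most $C2^{-l}$ from its ancestor $l$ levels up, hence at most $2C$ overall.

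You can also close the gap by direct counting, and the unary configurations you fear (for instance ``margin $0$ with a forced edge'') do not arise. Consider a node with at least three parts and $m\geq 0$.
\begin{itemize}
\item If $|V_1|\geq 2$, let $u_1$ be the first vertex of $V_1$. Every pair $(u_1,v)$ yields a productive child, because the merged part still contains $V_1\setminus\{u_1\}$ and the margin is unchanged. There are at least two choices of $v$, one in each other part.
\item If $|V_1|=1$, the productive children are exactly the vertices lying in non-singleton parts. Since $m\geq 0$ rules out all parts being singletons, there are again at least two.
\end{itemize}
Hence only nodes with two parts can be unary, and the recursion tree has $O(|\ICT(G)|)$ nodes.

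There is also a smaller but real error. Your claim that contraction preserves $m$, and hence every call is productive, is false as stated. Lemma~\ref{lemma:complete} presupposes non-empty parts (its proof picks a vertex in each part), and the merged part can be empty. Take $V_1=\{a,b,c\}$, $V_2=\{d\}$, $V_3=\{e,f\}$; here $V_1$ is even a largest part. After removing $a$ and $b$ you still have $m=0$, yet the call on $(c,d)$ leaves an empty merged part and outputs nothing.

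The paper explicitly maintains whether each part is non-empty, and you need that test too. The dead children are the vertices of singleton parts once only one vertex of $V_1$ remains. They must either be skipped, e.g.\ by keeping a list of the non-singleton parts, or be paid for. Since $m\geq 0$ implies they are fewer than their productive siblings, paying for them suffices for the amortized bound.
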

\begin{proof}
In Equation~\ref{eq:large_union}, all graphs are complete, hence we can use the algorithm derived from Lemma~\ref{lemma:complete} to test whether they have an interconnection tree. Moreover, the depth is only $k$, since the number of parts is decremented at each recursive call. Hence, we have a delay in $O(nk)$ where $n$ is the number of vertices (we do not need to represent the edges explicitly since the graph is complete multipartite). 

In fact, the delay can be made better by maintaining the proper data structure and doing the union in order of increasing $u$. 
First, the criterion of Lemma~\ref{lemma:complete} can be maintained in constant time. 
We maintain the number of partitions which always decrease by one in a subgraph $(G_{/(u,v)})_{\geq u}$ and the number of vertices which decreases by two because of the contraction and by one each time we consider a contraction from a new vertex.

Second, we can compute $(G_{/(u,v)})_{\geq u}$ from the previous $(G_{/(u,v')})_{\geq u}$ in constant time. To do so, we store the part $V_1$ as a list of the original parts which constitutes it and each original part is stored as a linked list of its element to allow for efficient removal of elements. We also maintain a table of the vertices pointing to the list each vertex belongs to and its position in this list.

When we compute $(G_{/(u,v)})_{\geq u}$ from $(G_{/(u,v')})_{\geq u}$, 
we have to add back the vertex $v'$ to its part and to remove the vertex $v$ in its part. Then, we remove the part of $v'$ to the list representing $V_1$ and insert the part to which $v$ belongs. 
We also need to compute $(G_{/(u,v)})_{\geq u}$ from $(G_{/(u',v')})_{\geq u'}$, where $u$ is the element after $u'$ in the order. The operations are the same as in the previous case, and we also remove $u'$ from its part. The operations are in constant number and in constant time (insertion in a list), hence the complexity for a single recursive call is in $O(1)$. Therefore, the delay of the algorithm is in $O(k)$.

We now analyze the amortized delay of this algorithm. Assume that part $V_1$ is initially chosen to be the largest and will thus always be the largest in the recursive calls. We charge the computation time associated to computing the subgraph $G$ in the enumeration algorithm and to detect the first empty set $\ICT((G_{/(u,v)})_{\geq u})$ to all its children divided equally. This cost is constant because of the previous analysis and denoted by $C$. 

When $|V_1|\geq 2$ then  $|\ICT(G)| \geq 2|\ICT((G_{/(u,v)})_{\geq u})|$, for any $(u,v)$. Moreover, when $|V_1| = 1$, there are one or two parts and the interconnection tree is produced in one step and constant time $C$. Therefore, if we consider an interconnection tree, it has received a cost of $2^{-l}C$ from its ancestor $l$ level upper in the tree. Hence, the cost by interconnection tree is bounded by $\sum_{l\leq k-1}C 2^{-k} \leq 2C$.
\end{proof}